\journal{Automatica}
\newtheorem{theorem}{\textbf{Theorem}}
\newtheorem{lemma}{\textbf{Lemma}}
\newtheorem{example}{\textbf{Example}}
\newtheorem{corollary}{\textbf{Corollary}}
\newtheorem{remark}{\textbf{Remark}}
\newtheorem{definition}{\textbf{Definition}}
\newtheorem{problem}{\textbf{Problem}}
\newtheorem{proposition}{\textbf{Proposition}}
\newenvironment{proof}{{\noindent{\bf \emph{Proof:}}}\quad}{\hfill $\square$\par}
\begin{document}

\begin{frontmatter}

\title{Structural Controllability of Networked Relative Coupling Systems}

\author{Yuan Zhang, Yuanqing Xia, and Dihua Zhai}
\address{School of Automation, Beijing Institute of Technology, Beijing, China\\~Email: $\emph{\{zhangyuan14,xia\_yuanqing,zhaidih\}@bit.edu.cn}$}

\fntext[myfootnote]{This work was supported in part by the China Postdoctoral Innovative Talent
Support Program (No. BX20200055), the China Postdoctoral Science Foundation (No. 2020M680016), the National Natural Science Foundation of
China (NNSFC, No. 62003042), and the State Key Program of NNSFC (No. 61836001).}

\begin{abstract} This paper studies the controllability of networked relative coupling systems (NRCSs), in which subsystems are of fixed high-order linear dynamics and coupled through relative variables depending on their neighbors, from a structural perspective. The purpose is to explore conditions for subsystem dynamics and network topologies under which for almost all weights of the subsystem interaction links, the corresponding numerical NRCSs are controllable, which is called structurally controllable. Three types of subsystem interaction fashions are considered: 1) each subsystem is single-input-single-output (SISO), 2) each subsystem is multiple-input-multiple-output (MIMO), and the weights for all channels between two subsystems are identical, and 3) each subsystem is MIMO, but different channels between two subsystems can be weighted differently.
{We show that all parameter-dependent modes of the NRCSs are generically controllable under some necessary connectivity conditions. We then derive necessary and/or sufficient conditions for structural controllability depending on subsystem dynamics and network topologies' connectivity in a decoupled form for all the three interaction fashions.} We also extend our results to handle certain subsystem heterogeneities and demonstrate their direct applications on some practical systems, including the mass-spring-damper system and the power network.
\end{abstract}

\begin{keyword}
Relative coupling, structural controllability, networked systems, fixed mode, heterogeneity
\end{keyword}

\end{frontmatter}

\section{Introduction}
Relative coupling/sensing, {namely, coupling through relative information/sensing relative variables, rather than the absolute,} is ubiquitous in real-world dynamic systems, ranging from natural systems like thermal propagating systems, liquid flow systems  and car-following traffic systems \citep{Modern_Control_Ogata}, to human-made ones like unmanned aerial vehicle (UAV) formation systems \citep{olfati2004consensus} and extremely large telescope control systems \citep{sarlette2014control}. {Networked relative coupling system {(NRCS) is a class of networked systems whose subsystems interact with each other through relative variables depending on their neighbors \citep{olfati2004consensus,hamdipoor2019partitioning,zhang2014upper}}}, which belongs to the so-called diffusively coupled networks in the literature \citep{zhang2014upper,Menara2020StabilityCF}. There have been many active research topics on NRCSs, such as consensus \citep{olfati2004consensus}, synchronization \citep{Menara2020StabilityCF}, stability \citep{hamdipoor2019partitioning}, etc. Among them, a fairly fundamental property, namely, controllability/observabi-lity, has also attracted many researchers' interest.
It is widely accepted that this property is not only theoretically significant, as itself is often related to both algebraical and topological properties of networks \citep{R.Am2009Controllability}, but also relevant to other important system aspects, such as stabilization, the existence of optimal controllers, and designing formation protocols \citep{Modern_Control_Ogata}.

Concerning the controllability of NRCSs, many works have focused on the controllability of networks with Laplacian related dynamics. In the field of multi-agent systems (MASs), many studies resolve this issue using spectra analysis of Laplacian matrices or graph-theoretic tools \citep{R.Am2009Controllability,aguilar2015graph,zhang2014upper}. Particularly, the controllability of consensus-based MASs is studied in \cite{R.Am2009Controllability} and \cite{zhang2014upper} using  (almost) equitable partitions and graph automorphism. Some graph-theoretic characterizations for the controllability of Laplacian-based leader-follower systems are reported in \cite{aguilar2015graph}. {However, most of these works do not consider the situations that nodes constituting the networks may have high-order dynamics and that each node may be multi-input-multi-output (MIMO). Thus the interactions among them may not be described by graphs with scalar-weighted edges.}

On the other hand, significant efforts have also been devoted to networks of high-order linear systems (with general coupling mechanisms). Relevant works include \cite{L.Wa2016Controllability}, \cite{hao2018further} and \cite{xue2018modal} on networks of identical systems,  \cite{zhou2015on} and \cite{Y_Zhang_2016} on networks with heterogeneous subsystems, and \cite{chapman2014controllability} on networks of networks. These works are built upon completely deterministic system models seeking to find relationships between controllability and network topologies as well as subsystem dynamics, {and most of their results are rank conditions based on the PBH test. Structural controllability, a notion focusing on the controllability in the generic sense that does not rely on the precise system parameters, has also been adopted in network studies \citep{Composability,Liu2019AGC,zhang2019structural,Commault2019Generic}. For example, \cite{Composability} explored structural controllability on composite systems with an emphasis on the distributed verification, and \cite{Liu2019AGC} worked on systems satisfying a so-called `binary' parameterization.} Recently, the structural controllability of networked systems is considered in \cite{zhang2019structural} and \cite{Commault2019Generic}, where the subsystem dynamics are partially or completely fixed under the assumption that subsystem interaction weights can take values independently. {Note that such an assumption might prevent their results from being directly applied to NRCSs, as there exist zero row sum constraints in the associated Laplacian matrices.}

{In this paper, motivated by some real-world systems we adopt a practical model for a class of NRCSs, in which subsystems are of fixed identical high-order dynamics, but the interaction weights among them are unknown.} We study the structural controllability of the NRCSs, with the purpose of finding conditions for subsystem dynamics and the network topologies under which the associated numerical systems are controllable for almost all values of interaction weights. Three types of subsystem interaction
fashions are considered, including: 1) each subsystem is
single-input-single-output (SISO), 2) each subsystem is
MIMO with equally weighted interaction channels, and 3) each subsystem is MIMO, but the interaction channels between two subsystems can be weighted differently. {For each of the three types of interaction fashions, we give necessary and/or sufficient conditions for structural controllability depending on subsystem dynamics (algebraic conditions) and the network topologies (graph-theoretic conditions) in a decoupled form.  Notably, for the SISO subsystem case, our results naturally generalize \cite{goldin2013weight} and \cite{Kazemi2018Structural} where the consensus-based networks of single integrators are considered. A design procedure is also given to construct interaction weights for controllable NRCSs with given SISO subsystems (Section \ref{section_SISO}). For the last two interaction fashions, we also show that under some necessary connectivity conditions, all parameter-dependent modes of the NRCSs are generically controllable (Sections \ref{section_single} and \ref{MIMO_section}). Thus, the structural controllability verification problems collapse to verifying generic ranks at some fixed modes of subsystems.  To the best of our knowledge, the results for the second case are among the early attempts to give {\emph{graph-theoretical conditions}} for structural controllability in which the indeterminates can have high-rank coefficient matrices (along with \citealt{menara2018structural} etc.), and those for the third case are to understand structural controllability of networks with multiplex links \citep{tuna2016synchronization,Lombana2020DistributedIO}.} We finally extend our results to handle certain subsystem heterogeneities, which is illustrated by some typical practical
 systems (Section~\ref{section_V}).

  {Interested readers are referred to \cite{zhang2020structural} for an extension of the third interaction fashion to NRCSs with undirected network topologies. In contrast to \cite{zhang2020structural}, where the weights of multiplex links between two subsystems must form a {\emph{full}} matrix, in this paper, they can form a diagonal matrix, which in principle enables representing an arbitrarily prescribed zero-nonzero structure by suitable matrix transformations. Additionally, this paper reveals some new insights into the role of network topologies's connectivity in the existence of parameter-dependent uncontrollable modes.}

{\emph{Notations:}} For a set, $|\cdot |$ denotes its cardinality.
$\sigma(M)$ denotes the set of eigenvalues of matrix $M$, and $\otimes$ denotes the Kronecker product. ${\bf diag}\{X_i|_{i=1}^n\}$ denotes the block diagonal matrix whose $i$th diagonal block is $X_i$, while ${\bf col}\{X_i|_{i=1}^n\}$ the matrix with its $i$th row block being $X_i$. Denote the set of all $m\times n$ matrices by ${\mathbb M}^{m\times n}$. A directed graph (digraph for brevity) is denoted by ${\cal G}=({\cal V},{\cal E})$ where ${\cal V}$ is the vertex set and ${\cal E}\subseteq {\cal V}\times {\cal V}$ is the edge set. The set of edges of ${\cal G}$ is also denoted by $E({\cal G})$.

\section{Problem formulation} \label{section_II}
{\subsection{Motivating example}
{Before introducing the formal system model, we introduce a motivating example. Consider a typical mass-spring-damper system shown in Fig. \ref{damping_vehicle} \citep{Modern_Control_Ogata,zhang2019structural}, which consists of $N$ subsystems.} For the $i$th subsystem, let $x_i$ be the displacement of the mass, $m_i$, $k_i$, and $\mu_i$ be the mass, the constants of the spring and the damper, respectively, and  $u_i$ be the force imposed on the mass. Let $x_{i1}=x_i$ and $x_{i2}=\dot x_i$. Then the dynamics for the $i$th mass (subsystem) can be written as the following state-space model
{\small
\[\begin{array}{l}
\left[ \begin{array}{l}
{{\dot x}_{i1}}\\
{{\dot x}_{i2}}
\end{array} \right] = {\left[ {\begin{array}{*{20}{c}}
0&1\\
0&0
\end{array}} \right]}\left[ \begin{array}{l}
{x_{i1}}\\
{x_{i2}}
\end{array} \right] + \sum \limits_{j=i-1,i+1} l_{ij}^{[1]}b_1c_1\left[ {\begin{array}{*{20}{c}}
{{x_{j1}} - {x_{i1}}}\\
{{x_{j2}} - {x_{i2}}}
\end{array}} \right]\\
+\sum \limits_{j=i-1,i+1} l_{ij}^{[2]}b_2c_2\left[ {\begin{array}{*{20}{c}}
{{x_{j1}} - {x_{i1}}}\\
{{x_{j2}} - {x_{i2}}}
\end{array}} \right]+b\frac{{{u_i}}}{{{m_i}}},
\end{array}\]}where $b=b_1=b_2=[0,1]^{\intercal}$, $c_1=[1, 0]$, $c_2=[0, 1]$, $l^{[1]}_{i,i-1}=k_i/m_i$, $l^{[1]}_{i,i+1}=k_{i+1}/m_i$, $l^{[2]}_{i,i-1}=\mu_i/m_i$, and $l^{[2]}_{i,i+1}=\mu_{i+1}/m_i$.

{In the above model, subsystems are coupled through relative information. The intrinsic dynamics of each subsystem when isolated are known from physical modeling (and probably identical). The unknown parameters (parameters $m_i$, $k_i$, and $\mu_i$ for each subsystem) are reflected in the interaction weights. Many practical networked systems share similar characteristics (e.g., {the connected tank system} and the power network; see Section \ref{section_V}).}

\begin{figure}
  \centering
  \includegraphics[width=3.0in]{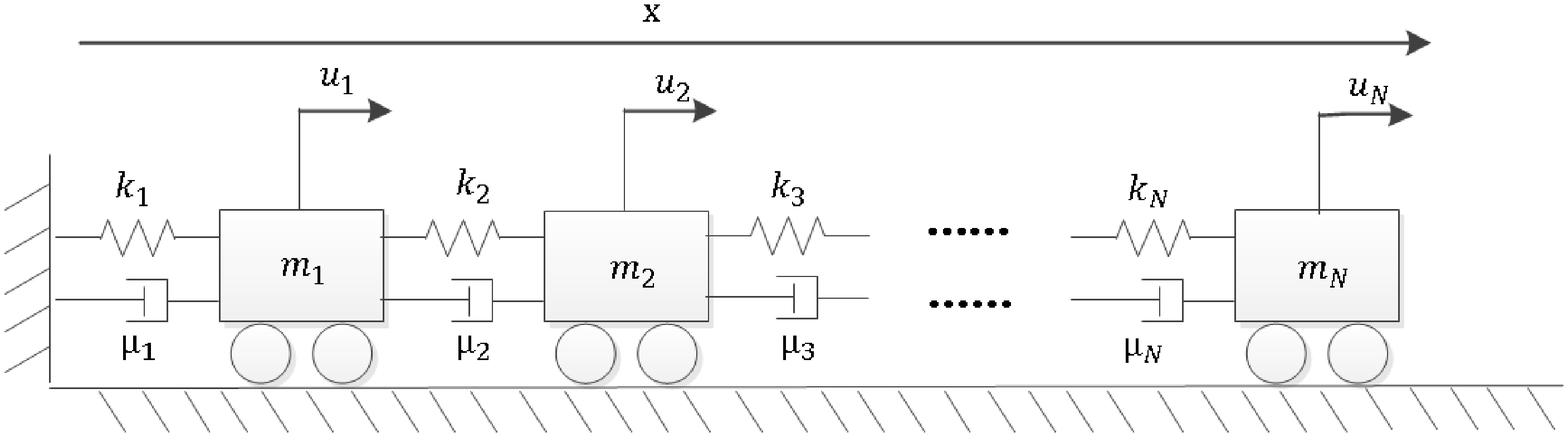}
  \caption{The mass-spring-damper system \citep{Modern_Control_Ogata,zhang2019structural}}\label{damping_vehicle}
\end{figure}

\subsection{Problem statement}

{ In this paper, an NRCS formally refers to a system in which each subsystem interacts with others through relative information (i.e., the difference between state or output variables) depending on its neighbors over a network. More precisely, let ${\cal G}_{\rm sys}=({\cal V}_{x}, {\cal E}_{xx})$ be a digraph without self-loops describing the subsystem interaction topology of an NRCS, which consists of $N$ subsystems, with ${\cal V}_{x}=\{1,...,N\}$, and $(i,j)\in {\cal E}_{xx}$ if the $j$th subsystem is directly influenced by the $i$th one.}  The $i$th subsystem, denoted by $S_i$, $i\in\{1,...,N\}$, has the following dynamics
\begin{equation} \label{sub_dynamic}
\dot x_i(t)=Ax_i(t)+Bv_i(t)
\end{equation}
where $A\in {\mathbb R}^{n\times n}$, $B\doteq[b_1,...,b_r]$ with $b_j\in {\mathbb R}^{n}$ for $j=1,...,r$, and $x_i(t)\in {\mathbb R}^{n}$ and $v_i(t)\in {\mathbb R}^{r}$ are the state and input of $S_i$, respectively. The input $v_i(t)$ may contain both subsystem interactions and the external control inputs, whose $k$th component, denoted by $v_{ik}(t)$, is expressed~as
\begin{equation} \label{sub_interaction}
v_{ik}(t)\!\!=\!\!\sum \limits_{j=1,j\ne i}^N l^{[k]}_{ij}c_k(x_j(t)-x_i(t))+\delta_i u_{ik}(t),
\end{equation}for $k=1,...,r$. Here, $u_{ik}(t)$ is the $k$th component of the external input $u_i(t)\in {\mathbb R}^{r}$, i.e., $u_i(t)\doteq [u_{i1}(t),...,u_{ir}(t)]^{\intercal}$, $\delta_i\in \{0,1\}$ with $\delta_i=1$ meaning that $S_i$ is directly controlled by $u_i(t)$, and $\delta_i=0$ the contrary; $c_k\in {\mathbb R}^{1\times n}$ is the output vector generating the $k$th linear combination of the state difference $x_j(t)-x_i(t)$ (namely, internal output), and $l^{[k]}_{ij}\in {\mathbb R}$ is the weight imposed on $c_k(x_j(t)-x_i(t))$. Define $C\doteq [c_1^{\intercal},...,c_r^{\intercal}]^{\intercal}$. We will say that $(A,B,C)$ are parameters describing the subsystem intrinsic dynamics. For each $k\in \{1,...,r\}$, $l^{[k]}_{ij}\ne 0$ only if $(j,i)\in {\cal E}_{xx}$ ($i\ne j$).
Let $l^{[k]}_{ii}=-\sum \nolimits_{j=1,j\ne i}^N l^{[k]}_{ij}$.  {Define the (weighted) Laplacian matrix associated with ${\cal G}_{\rm sys}$ as $L_k=[-l^{[k]}_{ij}]$, which means that the entry in the $i$th row and $j$th column of $L_k$ is $-l^{[k]}_{ij}$}.  {Let ${\cal I}_u=\{i: \delta_i\ne 0\}$, and $\Delta=[{\bf e}^{N}_i|_{i\in {\cal I}_u}]$, where ${\bf e}^{N}_i$ represents the $N$ dimensional column vector whose $i$th entry is $1$ and the rest are zero.} Define $u(t)={\bf col}\{u_i(t)|_{i\in {\cal I}_u}\}$, and $x(t)={\bf col}\{x_i(t)|_{i=1}^N\}$. Then the lumped state-space representation of the NRCS (\ref{sub_dynamic})--(\ref{sub_interaction}) is
\begin{equation} \label{lump_ss}
\dot x(t)=A_{\rm sys}x(t) + B_{\rm sys}u(t),
\end{equation}with
\begin{equation} \label{lump_pp_MIMO} A_{\rm sys}=I_N\otimes A-\sum \nolimits_{k=1}^r L_k\otimes b_kc_k, B_{\rm sys}=\Delta \otimes B.\end{equation}

{\begin{definition} \label{definition_SC}
Suppose that the parameters $(A,B,C)$ are known for each subsystem and ${\cal G}_{\rm sys}$, $\Delta$ are given. The {NRCS} (\ref{sub_dynamic})--(\ref{sub_interaction}) is said to be structurally controllable, if there exists a set of values for $\{l^{[k]}_{ij}\}_{(j,i)\in {\cal E}_{xx}}^{k=1,...,r}$, such that the associated numerical system is controllable.
\end{definition}}

{Using the algebraic variety arguments (cf. \citealt{Murota_Book}), it is easy to prove that}, if system (\ref{sub_dynamic})--(\ref{sub_interaction}) is structurally controllable, then for almost all values for $\{l^{[k]}_{ij}\}^{k=1,...,r}_{(j,i)\in {\cal E}_{xx}}$ except for a set of Lebesgue measure zero in the  parameter space, the corresponding numerical systems are controllable. In other words, controllability is a {\emph {generic property}} for the NRCS (\ref{sub_dynamic})--(\ref{sub_interaction}) \citep{generic}.

\begin{remark}\label{known_sub} In many practical scenarios, subsystems parameters $(A,B,C)$ might be either known accurately from physical modeling (the most common case is the high-order integrators; cf. the motivating example, the UAV formation systems [\citealp{olfati2004consensus}]) or easily accessible from system identification. The subsystem interaction weights, on the contrary,  might be harder to obtain due to the geographical distance between subsystems or variants of parameters dominating the interaction channels \citep{zhang2019structural,Lombana2020DistributedIO}. Moreover, parameter variants for subsystem intrinsic dynamics sometimes could be decoupled and put into the interaction weights (see Section \ref{section_V}). Based on  these  observations,\! and also,\! for ease of exposition, the parameters $(A,B,C)$ are \!assumed \!to \!be \!known \!herein.
\end{remark}
\vspace*{-0.7em}
{{ \begin{remark}\label{compare_inf} \cite{Composability} studied the structural controllability of networked systems where both subsystem dynamics and their interconnections are described by structured matrices, i.e., matrices whose entries are either fixed zero or unknown free parameters, in the sense of Lin's theory \citep{C.T.1974Structural}, rather than Definition \ref{definition_SC}. While their adopted model can cover general networks of linear systems, and the obtained results are noticeable owing to simple graphical representations and efficient computations, it is expected that with more {\emph{available}} prior knowledge on subsystem dynamics and the involved parameter dependencies  taken into account, more accurate results may be obtained \citep{generic}. A further discussion on possible inclusion of the case where not all subsystem parameters are exactly known will be given in Corollary \ref{corollary_het} of Section \ref{section_V}.
\end{remark}}}

In this paper, arising from observations on some practical systems, we will consider three types of subsystem interaction fashions depending on the subsystem inputs/outputs. They are: (a) the SISO case, i.e., $r=1$ meaning that each subsystem is SISO; (b) the MIMO via equally weighted channels, where $r>1$ and $L_1=\cdots = L_r$; and (c) the MIMO via differently weighted channels, where $r>1$ and there is no parameter dependency among $L_1,...,L_r$ except that they share the same zero-nonzero patterns. See Fig. \ref{couple_channel} for illustrations. The first two cases are called as scalar-weighted networks in \cite{tuna2016synchronization}, and have been the research focus in most existing literature \citep{zhang2014upper,L.Wa2016Controllability,hao2018further,xue2018modal,Commault2019Generic}. { The last case is inspired by the observation that in some practical networks, different internal outputs of~subsystems may represent diverse physical variables, and are therefore transmitted by channels with corresponding parameters (cf. the motivating example).} {Networks in which two nodes/agents are connected by different types of links are recently called `multiplex networks', which form a subset of multilayer networks \! \citep{Lombana2020DistributedIO}.} The main problem considered in this paper is formulated as follows.

\begin{problem} \label{prob1} Suppose that parameters $A,B,C$ of each subsystem, as well as ${\cal G}_{\rm sys}$ and $\Delta$,  are known for the NRCS (\ref{sub_dynamic})--(\ref{sub_interaction}). {Under each of the three aforementioned subsystem interaction fashions}, verify whether the NRCS (\ref{sub_dynamic})--(\ref{sub_interaction}) is structurally controllable or not.
\end{problem}

Note that there are nonzero constants like $A, B$, and $C$ in $(A_{\rm sys}, B_{\rm sys})$, as well as the zero row sum constraints on the Laplacian matrices $L_k|_{k=1}^r$. Moreover, each indeterminate (free parameter) may have a coefficient matrix with rank larger than one in fashion (b).  The traditional Lin's theory \citep{C.T.1974Structural}, as well as the results of \cite{zhang2019structural} and \cite{Commault2019Generic}, cannot be directly adopted to Problem~\ref{prob1}.

\begin{figure}
  \centering
  \includegraphics[width=2.1in]{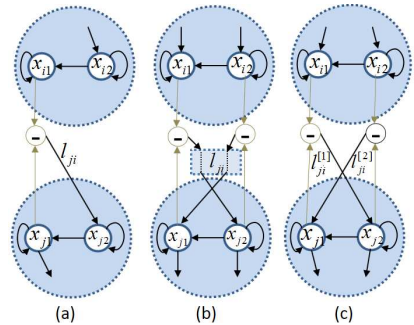}
  \caption{Three types of interaction fashions considered in this paper. From the left to the right: SISO, MIMO via equally  weighted channels, and MIMO via differently weighted channels. }\label{couple_channel}
\end{figure}

\section{Preliminaries} \label{preliminary}
{\bf Definitions and notations in graph theory:} For a digraph ${\cal G}=({\cal V},{\cal E})$ with $N$ vertices, a path from vertex $v_i$ to vertex $v_j$ is a sequence of edges $(v_i,v_{i+1})$, $(v_{i+1},v_{i+2})$,$\cdots$,\\ $(v_{j-1},v_{j})$ where each edge belongs to $\cal E$.  A cycle is a path from a vertex to itself. A spanning tree of $\cal G$ is a subset of $N-1$ edges that form a tree \citep{Geor1993Graph}. {{With a little abuse of terminology, we say $\cal G$ has a spanning tree with the topological order $(v_{k_1},...,v_{k_N})$, if in this tree the parent of $v_{k_{i}}$ is among vertices $\{v_{k_1},...,v_{k_{i-1}}\}$, $\forall i\in \{2,...,N\}$. }} 

For the NRCS (\ref{sub_dynamic})--(\ref{sub_interaction}), a vertex $i\in {\cal V}_{x}$ is called a {driving vertex}, if $i\in {\cal I}_u\doteq \{i: \delta_i\ne 0\}$. Let ${\cal U}=\{u_i: i\in {\cal I}_u\}$, and define a digraph $\bar {\cal G}_{\rm sys}=({\cal V}_{x}\cup {\cal U}, {\cal E}_{xx}\cup {\cal E}_{ux})$, where ${\cal E}_{ux}=\{(u_i,i), i\in {\cal I}_u\}$.  We say a vertex $i$ is input-reachable,  if there exists at least one path from a vertex $u\in {\cal U}$ and ending at $i$ in $\bar {\cal G}_{\rm sys}$. If every vertex $i\in {\cal V}_{x}$ is input-reachable, we say the network topology (or $\bar {\cal G}_{\rm sys}$) is {\emph {globally input-reachable}}. {{It is clear that global input-reachability of $\bar {\cal G}_{\rm sys}$ means that, ${\cal G}_{\rm sys}$ can be decomposed into a collection of (vertex) disjoint trees, of which each is rooted at one driving vertex.}}

Given a matrix $[H, P]$ where $H\in {\mathbb M}^{n\times n}$ and $P\in {\mathbb M}^{n\times m}$, we will use ${\cal G}_{\rm aux}( H, P)$ to denote the auxiliary graph associated with $[ H, P]$, which is defined analogically to $\bar {\cal G}_{\rm sys}$ associated with $(L_k,\Delta)$ ($k\in \{1,...,r\}$). More precisely, ${\cal G}_{\rm aux}( H, P)=({\cal V}_{H}\cup {\cal V}_P, {\cal E}_{HH}\cup {\cal E}_{PH})$, where ${\cal V}_H=\{v_1,...,v_n\}$, ${\cal V}_P=\{z_1,...,z_m\}$, ${\cal E}_{HH}=\{(v_i,v_j):  H_{ji}\ne 0\}$ and ${\cal E}_{PH}=\{(z_i,v_j):  P_{ji}\ne 0\}$. We say a vertex $v_i\in {\cal V}_H$ is input-reachable if there is a path from one vertex of ${\cal V}_{P}$ and ending at $v_i$. Global input-reachability of $G_{\rm aux}(H,P)$ is defined similarly to that of $\bar {\cal G}_{\rm sys}$. \!A cycle of ${\mathcal G}_{\rm aux}(H, P)$ is input-reachable if every vertex of this cycle is input-reachable.

{\bf Structural controllability with parameter dependencies:} { Structural controllability with parameter dependencies has been discussed in \cite{Morse_1976}, \cite{Hosoe1984StructuralCA}, \cite{zhang2019structural}, \cite{Liu2019AGC} and the references therein, based on which the following definitions are introduced. Let $p\doteq (p_1,...,p_m)$ with each $p_i$ an indeterminate. Let $R[s,p]$ be the set of polynomials of the variables $s,p_1,...,p_m$ with real coefficients, and $R[s,p]^{n_1\times n_2}$ the set of $n_1\times n_2$ matrices whose entries belong to $R[s,p]$. $R[s]$, $R[p]$, $R[s]^{n_1\times n_2}$ and $R[p]^{n_1\times n_2}$ are defined similarly. Let $\deg(a(s,p))$ denote the degree of $s$ for $a(s,p)\in R[s,p]$. For $a(s,p)\in R[s,p]$, $\alpha(s)\in R[s]$ is an $s$-factor of $a(s,p)$, if $a(s,p)$ is dividable by $\alpha(s)$ {\emph{and}} $\deg(\alpha(s))\ge 1$.  $\alpha(s,p)\in R[s,p]$ is called an $(s,p)$-factor of $a(s,p)\in R[s,p]$, if $\alpha(s,p)$ divides $a(s,p)$ {\emph{and}} $\alpha(s,p)$ has no $s$-factors.  For $M(s,p)\in R[s,p]^{n\times q}$ with $q\ge n$, denote by $\Gamma (M(s,p))$ the greatest common divisor of the determinants of all $n\times n$ submatrices of $M(s,p)$.

\begin{definition} \citep{Hosoe1984StructuralCA} \label{p-parameterized} Let $(A(p),B(p))$ be a plant parameterized by $p$ with the state transition matrix $A(p)\in R[p]^{n\times n}$ and the input matrix $B(p)\in R[p]^{n\times r}$. Zeros of the $(s,p)$-factors (resp. $s$-factors) of $\det(sI-A(p))$ are called parameter-dependent modes  (resp. fixed modes) of the plant.
\end{definition}

\begin{definition}\citep{Hosoe1984StructuralCA,zhang2019structural} \label{p-parameterized2} Consider the plant $(A(p),B(p))$ in Definition \ref{p-parameterized}. This plant is said to have parameter-dependent uncontrollable modes (resp. fixed uncontrollable modes), if $\Gamma([sI-A(p),B(p)])$ has $(s,p)$-factors (resp. $s$-factors).
\end{definition}}

From the above definitions, a parameter-dependent (resp. fixed) uncontrollable mode is the eigenvalue of $A(p)$ that depends on $p$ (resp. is independent of $p$) and is always uncontrollable. The set of parameter-dependent (resp. fixed) uncontrollable modes is a subset of parameter-dependent (fixed) modes. By the PBH test and properties of algebraic variety, {\emph{$(A(p),B(p))$ is structurally controllable, if and only if there exist neither parameter-dependent uncontrollable modes nor fixed uncontrollable modes \citep{Hosoe1984StructuralCA,Murota_Book,zhang2019structural}}}. As a special case,  when each $p_i$ in $[A(p),B(p)]$ has a rank-one coefficient matrix, i.e.,
\begin{equation} \label{Linear Parameterization} A(p)=A_0+\sum\limits_{i=1}^m g_ip_ih_{1i}^{\intercal}, B(p)=B_0+\sum\limits_{i=1}^m g_ip_ih^{\intercal}_{2i}\end{equation}
where $A_0\in {\mathbb R}^{n\times n}$, $B_0\in {\mathbb R}^{n\times r}$,  $g_i,h_{1i}\in {\mathbb R}^n$, and $h_{2i}\in {\mathbb R}^r$, $\forall i$, the
structural controllability of $(A(p),B(p))$ can be characterized as follows.

\begin{lemma} (\citealt{Morse_1976, zhang2019structural}) \label{Theorem of Linear Parameterization}
Consider $(A(p),B(p))$ in (\ref{Linear Parameterization}). Define two transfer function matrices as $G_{zv}(\lambda)=[h_{11},\cdots,h_{1k}]^{\intercal}(\lambda I-A_0)^{-1}\![g_1,\cdots,g_k]$, $G_{zu}(\lambda)=[h_{11},\cdots,h_{1k}]^{\intercal}(\lambda I-A_0)^{-1}B_0+[h_{21},\cdots,h_{2k}]^{\intercal}$. The following statements are true:

  1) There is no parameter-dependent uncontrollable mode for the plant $(A(p),B(p))$, if and only if every cycle is input-reachable in ${\mathcal G}_{\rm aux}(G_{zv}(\lambda),G_{zu}(\lambda))$;

   2) There is no fixed uncontrollable mode for the plant $(A(p),B(p))$, if and only if for each $\lambda_0\in \sigma(A_0)$, ${\rm grank}[\lambda_0 I- A(p), B(p)]=n$, where ${\rm grank}(\cdot)$ is the maximum rank a matrix can achieve as a function of its indeterminates.
\end{lemma}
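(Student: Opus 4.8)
The plan is to recognize $(A(p),B(p))$ as the closed loop obtained by wrapping a diagonal parameter gain around a fixed linear system, and then to translate each of the two uncontrollability notions into a rank test on the auxiliary transfer matrices. Collect $G=[g_1,\dots,g_m]$, $H_1=[h_{11},\dots,h_{1m}]$, $H_2=[h_{21},\dots,h_{2m}]$, and $P={\rm diag}(p_1,\dots,p_m)$, so that $A(p)=A_0+GPH_1^{\intercal}$ and $B(p)=B_0+GPH_2^{\intercal}$. These are exactly the state and input matrices produced by closing the feedback $v=Pz$ around the auxiliary plant $\dot x=A_0x+Gv+B_0u$, $z=H_1^{\intercal}x+H_2^{\intercal}u$, whose transfer matrices to the internal output $z$ are $G_{zv}(\lambda)=H_1^{\intercal}(\lambda I-A_0)^{-1}G$ and $G_{zu}(\lambda)=H_1^{\intercal}(\lambda I-A_0)^{-1}B_0+H_2^{\intercal}$. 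Since, as recalled in the excerpt, structural controllability is equivalent to the absence of both parameter-dependent and fixed uncontrollable modes, it suffices to prove statements 1) and 2) separately, each characterizing one of these obstructions.

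For statement 2) I would first argue that any fixed uncontrollable mode must lie in $\sigma(A_0)$: taking the first $n$ columns of $[sI-A(p),B(p)]$ exhibits $\det(sI-A(p))$ as one $n\times n$ minor, so every $s$-factor of $\Gamma([sI-A(p),B(p)])$ divides $\det(sI-A(p))$, which at $p=0$ reduces to $\det(sI-A_0)$. Hence candidate fixed modes are confined to $\sigma(A_0)$. A given $\lambda_0\in\sigma(A_0)$ is then a fixed uncontrollable mode precisely when $(s-\lambda_0)$ divides $\Gamma$, i.e. when $[\lambda_0I-A(p),B(p)]$ is rank deficient for all values of $p$, which is exactly ${\rm grank}[\lambda_0 I-A(p),B(p)]<n$. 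Negating over all $\lambda_0\in\sigma(A_0)$ yields 2) at once.

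Statement 1) is the substantive part. The plan is to apply the PBH test at a mode $\lambda\notin\sigma(A_0)$ (which covers every parameter-dependent mode) and eliminate the state. If $w\neq0$ satisfies $w^{\intercal}[\lambda I-A(p),B(p)]=0$, set $q^{\intercal}=w^{\intercal}GP$; substituting $w^{\intercal}=q^{\intercal}H_1^{\intercal}(\lambda I-A_0)^{-1}$ back gives $q^{\intercal}[\,I-G_{zv}(\lambda)P,\ G_{zu}(\lambda)\,]=0$, and conversely such a $q$ reconstructs an admissible $w$. Thus parameter-dependent uncontrollability is equivalent to row-rank deficiency of the $m\times(m+r)$ matrix $M(\lambda,p)=[\,I-G_{zv}(\lambda)P,\ G_{zu}(\lambda)\,]$. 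After clearing the common denominator $\det(\lambda I-A_0)$ (whose zeros only create $s$-factors, and hence belong to the fixed-mode bookkeeping of 2)), the task reduces to deciding whether $\Gamma(M)$ admits an $(s,p)$-factor.

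The hard part will be the combinatorial identification of those $(s,p)$-factors with the graph condition. I would expand $\det(I-G_{zv}(\lambda)P)$ by its permutation (cycle) expansion: the genuinely $p$-dependent terms are indexed by families of vertex-disjoint cycles in the $G_{zv}$-part of ${\mathcal G}_{\rm aux}(G_{zv}(\lambda),G_{zu}(\lambda))$, each cycle contributing the monomial $\prod_{i\in\text{cycle}}p_i$ times a product of entries of $G_{zv}$. The crux is then to show that a cycle's $p$-monomial survives as a factor common to \emph{all} maximal minors of $M$, and thus produces a parameter-dependent uncontrollable mode, exactly when that cycle is not input-reachable; whereas if some directed path from an external-input vertex $z_j$ reaches the cycle, one can substitute the corresponding $G_{zu}$-column into a maximal minor to obtain one in which the monomial no longer appears, thereby removing it from the greatest common divisor. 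Making this survival/cancellation bookkeeping precise --- tracking which minors retain a given cycle's monomial and proving the GCD loses it if and only if the cycle is input-reachable --- is the delicate step, and it mirrors the dilation/reachability dichotomy of classical Lin-type structural controllability, now carried out on the auxiliary graph.
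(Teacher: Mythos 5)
The paper does not prove this lemma at all: it is imported verbatim from \cite{Morse_1976} and \cite{zhang2019structural}, so there is no in-paper proof to compare against. Judged on its own terms, your proposal correctly reproduces the standard skeleton of those references. The feedback reformulation $A(p)=A_0+GPH_1^{\intercal}$, $B(p)=B_0+GPH_2^{\intercal}$ is exactly right, your argument for statement 2) is complete (the confinement of fixed modes to $\sigma(A_0)$ via evaluation at $p=0$, and the translation of ``$(s-\lambda_0)$ divides $\Gamma$'' into a generic-rank deficiency, are both sound, modulo the cosmetic point that $s$-factors are real polynomials so complex $\lambda_0$ enter through irreducible quadratics), and the PBH elimination $w^{\intercal}\mapsto q^{\intercal}=w^{\intercal}GP$ reducing statement 1) to rank deficiency of $[\,I-G_{zv}(\lambda)P,\ G_{zu}(\lambda)\,]$ is a correct and standard step, including your checks that $q\neq 0$ and that the reconstruction of $w$ from $q$ is nondegenerate.

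The genuine gap is that the entire content of statement 1) lives in the step you defer: the equivalence between $(s,p)$-factors of $\Gamma([\,I-G_{zv}(\lambda)P,\ G_{zu}(\lambda)\,])$ and input-unreachable cycles. As you have phrased it, the invariant is also not quite right. It is not that an individual cycle's monomial $\prod p_i$ ``survives as a factor common to all maximal minors'': what actually happens in the unreachable direction is that the rows indexed by input-unreachable vertices decouple, so that after a permutation every maximal minor contains $\det\bigl(I-G_{zv}^{(11)}(\lambda)P^{(1)}\bigr)$ as a factor, where the superscript $(1)$ marks the unreachable block; an unreachable cycle makes this determinant nonconstant in $p$, hence an $(s,p)$-factor of $\Gamma$ after clearing denominators. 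Conversely, the claim that reaching a cycle by a path from some $z_j$ lets you ``substitute the corresponding $G_{zu}$-column'' to kill the monomial does not by itself establish that $\Gamma$ has \emph{no} $(s,p)$-factor: the entries of $G_{zv}(\lambda)$ and $G_{zu}(\lambda)$ are fixed rational functions that may satisfy algebraic relations, so distinct minors can share nontrivial $p$-dependent divisors even when no single cycle monomial is common to all of them. Closing this requires the genericity argument exploiting the algebraic independence of $p_1,\dots,p_m$ (treating the $p_i$ as indeterminates over the field of rational functions in $\lambda$ and isolating the coefficient of a carefully chosen extremal monomial), which is precisely the technical core of \cite{Morse_1976}. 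Until that bookkeeping is carried out in both directions, statement 1) remains unproved.
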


\section{NRCSs with SISO subsystems} \label{section_SISO}
This section derives conditions for system (\ref{sub_dynamic})--(\ref{sub_interaction}) to be structurally controllable when each subsystem is SISO, based on a constructive method. Since $r=1$,  let $c\doteq C$, $b\doteq B$ and $L=L_1=[-l_{ij}]$ for simplicity. Then (\ref{lump_pp_MIMO}) becomes
$A_{\rm sys}=I_N\otimes A-L\otimes bc$, and $B_{\rm sys}=\Delta \otimes b$.

\begin{theorem} \label{main_theorem}
Suppose $r=1$ and $|{\cal I}_u|<N$.\footnote{This assumption is made to avoid the trivial case where $|{\cal I}_u|=N$. If $|{\cal I}_u|=N$, the NRCS (\ref{sub_dynamic})--(\ref{sub_interaction}) is always structurally controllable provided $(A,b)$ is controllable (necessary for controllability), as $L=0$ makes the associated system controllable.}  Then, the NRCS (\ref{sub_dynamic})--(\ref{sub_interaction}) is structurally controllable, if and only if

1) $(A,b)$ is controllable and $(A,c)$ is observable;

2) $\bar {\cal G}_{\rm sys}$ is globally input-reachable.
\end{theorem}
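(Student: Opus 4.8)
The plan is to split the biconditional into necessity and sufficiency, and within each part to treat the fixed modes (the eigenvalues lying in $\sigma(A)$, which are independent of the weights) separately from the parameter-dependent modes. Throughout I would work with the PBH rank test on $[\lambda I-A_{\rm sys},B_{\rm sys}]$, noting that $A_{\rm sys}=I_N\otimes A-L\otimes bc$ fits the rank-one parameterization of Lemma~\ref{Theorem of Linear Parameterization} with $A_0=I_N\otimes A$ and $B_{\rm sys}=\Delta\otimes b$ weight-independent.

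\emph{Necessity.} For condition 1), if $(A,b)$ is uncontrollable there is a left eigenvector $w^{\intercal}A=\mu w^{\intercal}$ with $w^{\intercal}b=0$; then $v\otimes w$ is, for every $v$ and every $L$, a left eigenvector of $A_{\rm sys}$ annihilating $B_{\rm sys}$, so the system is uncontrollable for all weights. If instead $(A,c)$ is unobservable, pick $A\xi=\lambda\xi$ with $c\xi=0$; then $v\otimes\xi$ is a right eigenvector of $A_{\rm sys}$ at $\lambda$ for every $v$, so the geometric multiplicity of $\lambda$ is at least $N$. Since $\mathrm{rank}(B_{\rm sys})=|{\cal I}_u|<N$ by the standing assumption, the PBH rank at $\lambda$ cannot be full for any $L$, which is exactly where the footnote hypothesis $|{\cal I}_u|<N$ enters. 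For condition 2), if $\bar{\cal G}_{\rm sys}$ is not globally input-reachable, let $T$ be the set of non-input-reachable vertices; all edges entering $T$ originate in $T$ and no input enters $T$, so $\{x:x_i=0,\ \forall i\in T\}$ is $A_{\rm sys}$-invariant and contains $\mathrm{Im}(B_{\rm sys})$, giving a proper reachable subspace for every $L$.

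\emph{Sufficiency.} Assuming 1) and 2), I would exploit that global input-reachability lets ${\cal G}_{\rm sys}$ be covered by vertex-disjoint trees each rooted at a driving vertex; setting all off-tree weights to zero makes $A_{\rm sys}$ block-diagonal over the trees, so it suffices to make each tree controllable. Ordering a tree topologically renders the induced $A_{\rm sys}$ block lower-triangular with diagonal blocks $A$ (root) and $A-l_kbc$ (each non-root, $l_k$ its incoming weight). The fixed modes $\lambda_0\in\sigma(A)$ are then handled for \emph{any} nonzero tree weights: for a left null vector ${\bf col}\{w_q\}$ of $[\lambda_0I-A_{\rm sys},B_{\rm sys}]$, each block equation reads $w_q^{\intercal}(\lambda_0I-A)=\gamma_qc$ with $\gamma_q=-\sum_pL_{pq}(w_p^{\intercal}b)$; observability of $(A,c)$ gives $c\not\perp\ker(\lambda_0I-A)$, forcing $\gamma_q=0$, so each $w_q$ is a left $\lambda_0$-eigenvector of $A$ or zero. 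Controllability of the single-input pair $(A,b)$ makes that eigenspace one-dimensional with generator $f$ satisfying $f^{\intercal}b\neq0$, whence $\theta_q\doteq w_q^{\intercal}b$ obeys $L^{\intercal}\theta=0$ and $\theta_i=0$ on ${\cal I}_u$; reading $L^{\intercal}\theta=0$ from the leaves upward kills $\theta$ on all non-roots, while $\theta_{\rm root}=0$ follows because each root is a driving vertex, so $\theta=0$ and there is no fixed uncontrollable mode.

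The parameter-dependent modes ($\lambda\notin\sigma(A)$) are the crux, and I would dispatch them by an inductive choice of weights that maintains controllability of the growing sub-tree $(A^{(k)}_{\rm sys},B^{(k)}_{\rm sys})$. When a leaf $v_{k+1}$ with parent $v_j$ is appended, a PBH null vector with new block $z_{k+1}$ forces either $z_{k+1}=0$, in which case controllability of the sub-network closes the argument, or $1+l_{k+1}G_{\rm sub}(\lambda)=0$ together with $T_j(\lambda)=0$, where $G_{\rm sub}(\lambda)=c(\lambda I-A)^{-1}b$ and $T_j(\lambda)=(e_j^{\intercal}\otimes c)(\lambda I-A^{(k)}_{\rm sys})^{-1}B^{(k)}_{\rm sys}$. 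Controllability of the sub-network yields $T_j\not\equiv0$, so its zeros are finite, and $\sigma(A-l_{k+1}bc)$ meets $\sigma(A^{(k)}_{\rm sys})\setminus\sigma(A)$ only for finitely many $l_{k+1}$; choosing $l_{k+1}$ outside this finite exceptional set keeps every mode controllable. Iterating over all vertices of all trees produces weights for which the numerical NRCS is controllable, establishing structural controllability. The main obstacle is precisely this last step: guaranteeing that the weight-dependent eigenvalues introduced by each newly attached subsystem can be steered clear of the finitely many transmission zeros of the partially built network, which is what forces an explicit construction rather than a bare genericity count.
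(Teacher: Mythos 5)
Your proof is correct and follows essentially the same route as the paper's: a constructive induction along a topologically ordered spanning tree rooted at a driving vertex, choosing each new weight $l_{k+1}$ so that $\sigma(A-l_{k+1}bc)$ avoids the spectrum of the current sub-network and the finitely many zeros of a transfer function that is nonvanishing by output controllability (the paper's Lemmas~2 and~3). Your explicit eigenvector constructions for necessity and your standalone leaf-to-root elimination of left null vectors at the fixed modes in $\sigma(A)$ are sound refinements of steps the paper handles by citation and by the same induction, respectively.
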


The above result generalizes those of \cite{goldin2013weight} and \cite{Kazemi2018Structural} on structural controllability of networks of single-integrators running the consensus protocol. {Compared with \citet[Th. 2]{Commault2019Generic}, where all interaction weights are independent, Theorem \ref{main_theorem} has a simpler form without explicitly requiring the cacti condition therein. This is not surprising, as the zero row sum constraints of $L$ naturally induce a self-loop for each vertex of ${\cal G}_{\rm sys}$.}

{We are now giving a constructive and self-contained proof of Theorem \ref{main_theorem}, which is partially inspired by the techniques of \cite{menara2018structural}. Our proof enables a weight design procedure, as well as an extension to the case with heterogeneous subsystems (Corollary \ref{corollary_het2} of Section \ref{section_V}).} To this end, the following intermediate results are needed.

\begin{lemma} \citep{Modern_Control_Ogata} \label{BasicPoleAssign}
 Given $A\in {\mathbb R}^{n\times n}$, $b\in {\mathbb R}^{n}$, and $c\in {\mathbb R}^{1\times n}$, suppose that $(A,b)$ is controllable and $(A,c)$ is observable. Let $\Omega \subseteq {\mathbb C}$ be an arbitrary set of finite number of complex values.  Then,  there always exists $p\in {\mathbb R}$, such that $\sigma(A-pbc)\cap \Omega=\emptyset$.
\end{lemma}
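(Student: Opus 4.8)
The plan is to prove the stronger claim that the set of $p \in \mathbb{R}$ for which $\sigma(A - pbc)$ meets $\Omega$ is \emph{finite}; its (cofinite, hence nonempty) complement then supplies the desired $p$. The key structural observation is that because $bc$ has rank one, the characteristic polynomial of $A - pbc$ evaluated at any fixed point is \emph{affine} in $p$. Concretely, fixing $\lambda \in \mathbb{C}$ and applying the rank-one determinant identity $\det(M + uv^{\intercal}) = \det(M) + v^{\intercal}\mathrm{adj}(M)u$ (which holds as a polynomial identity in the entries of $M$, with $\mathrm{adj}(\cdot)$ the adjugate) to $M = \lambda I - A$, $u = pb$, $v^{\intercal} = c$, I would obtain
\[
\det(\lambda I - A + p\,bc) = \alpha(\lambda) + p\,\beta(\lambda),
\]
where $\alpha(\lambda) \doteq \det(\lambda I - A)$ and $\beta(\lambda) \doteq c\,\mathrm{adj}(\lambda I - A)\,b$. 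Thus $\lambda \in \sigma(A - pbc)$ if and only if $\alpha(\lambda) + p\,\beta(\lambda) = 0$, a single affine equation in $p$.

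The core step is to exclude the degenerate case $\alpha(\lambda) = \beta(\lambda) = 0$, and this is exactly where the hypotheses enter. Since $(A,b)$ is controllable and $(A,c)$ is observable, $(A,b,c)$ is a minimal realization of the SISO transfer function $c(sI-A)^{-1}b = \beta(s)/\alpha(s)$, where $\alpha(s)$ is monic of degree $n$ and $\beta(s)$ has degree at most $n-1$. By the classical equivalence (minimality $\Leftrightarrow$ absence of pole-zero cancellation), $\alpha(s)$ and $\beta(s)$ are coprime and therefore share no common root; in particular $\alpha(\lambda)$ and $\beta(\lambda)$ never vanish simultaneously. Consequently, for each fixed $\lambda$ the set of ``bad'' $p$ is either the single point $p = -\alpha(\lambda)/\beta(\lambda)$ (when $\beta(\lambda) \neq 0$) or empty (when $\beta(\lambda) = 0$, which forces $\alpha(\lambda) \neq 0$, so $\lambda$ is never an eigenvalue).

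To finish, I would take the union over the finitely many $\lambda \in \Omega$, which excludes at most $|\Omega|$ values of $p$; any $p$ outside this finite set satisfies $\sigma(A - pbc) \cap \Omega = \emptyset$. The only delicate point, and what I expect to be the main obstacle, is the coprimality of $\alpha$ and $\beta$, as it is the sole place where controllability and observability are used. I would either cite the standard realization-theoretic fact or prove it directly: a common root $\lambda$ of $\alpha$ and $\beta$ would produce a pole-zero cancellation in the transfer function, reducing the McMillan degree below $n$ and contradicting minimality of the $n$-dimensional realization $(A,b,c)$.
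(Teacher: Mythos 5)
Your proof is correct, and it is worth noting that the paper itself offers no proof of this lemma at all --- it is stated with a bare citation to Ogata's textbook, so there is no in-paper argument to compare against. Your argument is the standard and, in my view, cleanest route: the matrix determinant lemma gives $\det(\lambda I - A + p\,bc) = \alpha(\lambda) + p\,\beta(\lambda)$ with $\alpha(\lambda)=\det(\lambda I - A)$ and $\beta(\lambda)=c\,\mathrm{adj}(\lambda I - A)b$, affine in $p$ for each fixed $\lambda$; minimality of the SISO realization $(A,b,c)$ forces $\alpha$ and $\beta$ to be coprime, so they never vanish simultaneously, and hence each $\lambda\in\Omega$ rules out at most one value of $p$ (possibly none, if $\beta(\lambda)=0$ or if $-\alpha(\lambda)/\beta(\lambda)\notin\mathbb{R}$). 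Taking the union over the finite set $\Omega$ leaves a cofinite set of admissible real $p$, which is stronger than the mere existence claimed. Every step checks out: the adjugate identity is a polynomial identity valid for singular $\lambda I - A$ as well, and the coprimality claim follows exactly as you say from the McMillan-degree characterization of minimality. This self-contained derivation is arguably an improvement on the paper's treatment, since it also makes transparent why the construction in the proof of Theorem~\ref{main_theorem} can simultaneously avoid the finite sets $\sigma(A_k)$ and $\Omega_k$ with a single generic choice of $p_{k+1}$.
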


\begin{lemma} \label{ControlNotZero}
 Given $A\in {\mathbb R}^{n\times n}$, $B\in {\mathbb R}^{n\times r}$, and $c_0\in {\mathbb R}^{1\times n}$, suppose that $(A,B)$ is controllable and $c_0\ne 0$. Then, $c_0(\lambda I-A)^{-1}B \not\equiv 0$.
\end{lemma}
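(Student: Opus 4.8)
The plan is to prove the contrapositive combined with a direct analytic argument: show that if $c_0 \neq 0$ and $(A,B)$ is controllable, then the rational (vector) function $\lambda \mapsto c_0(\lambda I - A)^{-1}B$ cannot be identically zero. Since $c_0(\lambda I - A)^{-1}B$ is a $1\times r$ row vector of rational functions, it suffices to show that at least one of its $r$ entries is a nonzero rational function, or equivalently that the whole expression does not vanish for all $\lambda$ outside $\sigma(A)$.

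First I would expand $(\lambda I - A)^{-1}$ as a power series (Neumann/resolvent expansion) valid for $|\lambda|$ large, namely $(\lambda I - A)^{-1} = \sum_{j=0}^{\infty} \lambda^{-(j+1)} A^{j}$. Substituting this in gives
\[
c_0(\lambda I - A)^{-1}B = \sum_{j=0}^{\infty} \lambda^{-(j+1)} c_0 A^{j} B.
\]
A rational (indeed analytic at infinity) function is identically zero if and only if all coefficients in this Laurent expansion vanish. Hence $c_0(\lambda I - A)^{-1}B \equiv 0$ is equivalent to $c_0 A^{j} B = 0$ for every $j \ge 0$, i.e. $c_0 [B, AB, A^{2}B, \ldots, A^{n-1}B] = 0$ (by Cayley--Hamilton the higher powers add nothing new). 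The key step is therefore to read off that $c_0$ annihilates the controllability matrix $\mathcal{C}(A,B) \doteq [B, AB, \ldots, A^{n-1}B]$.

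The main obstacle — really the crux of the argument — is to conclude from $c_0 \mathcal{C}(A,B) = 0$ together with controllability of $(A,B)$ that $c_0 = 0$, contradicting the hypothesis $c_0 \neq 0$. This is immediate once controllability is invoked: controllability of $(A,B)$ means $\mathrm{rank}\,\mathcal{C}(A,B) = n$, so $\mathcal{C}(A,B)$ has full row rank, and the only row vector $c_0 \in {\mathbb R}^{1\times n}$ with $c_0 \mathcal{C}(A,B) = 0$ is $c_0 = 0$. I would present this as: assume for contradiction that $c_0(\lambda I - A)^{-1}B \equiv 0$; then $c_0 A^{j}B = 0$ for all $j$, so $c_0$ lies in the left null space of the full-rank matrix $\mathcal{C}(A,B)$, forcing $c_0 = 0$, which is the desired contradiction. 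Therefore $c_0(\lambda I - A)^{-1}B \not\equiv 0$, completing the proof.

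One point to verify carefully is the legitimacy of matching Laurent coefficients: since $c_0(\lambda I - A)^{-1}B$ is a rational function whose numerator has degree strictly less than its denominator, it is analytic at $\lambda = \infty$, so its large-$\lambda$ power series in $\lambda^{-1}$ is unique and vanishes identically precisely when every coefficient $c_0 A^{j}B$ vanishes. This justifies the reduction and keeps the proof self-contained without appealing to the PBH test.
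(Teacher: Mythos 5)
Your proof is correct. It establishes the same underlying fact the paper relies on, but by a different, self-contained route: the paper simply invokes the theory of output controllability, observing that controllability of $(A,B)$ makes $(I_n,A,B)$ output controllable and citing the known equivalence between output controllability and linear independence (over $\mathbb{C}$, as functions of $\lambda$) of the rows of $(\lambda I-A)^{-1}B$, from which the nonexistence of a nonzero annihilating $c_0$ follows in one line. You instead prove that equivalence from scratch: expanding the resolvent as the Laurent series $\sum_{j\ge 0}\lambda^{-(j+1)}c_0A^jB$, noting that a strictly proper rational function vanishes identically iff all these coefficients vanish, and then using full row rank of the controllability matrix $[B,AB,\ldots,A^{n-1}B]$ to force $c_0=0$. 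The trade-off is the usual one: the paper's argument is shorter but leans on an external result (and a textbook reference), while yours is elementary, fully self-contained, and makes explicit exactly where controllability enters (the left null space of the controllability matrix is trivial). Both are sound; your justification of the Laurent-coefficient matching is careful and the appeal to Cayley--Hamilton, while not strictly needed for the direction you use, does no harm.
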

\begin{proof}
We resort to the theory of {\emph{output controllability}} \citep[Sec. 9.6]{Modern_Control_Ogata}. If $(A,B)$ is controllable, then $(I_n, A, B)$ is output controllable. This requires that, the rows of $(\lambda I-A)^{-1}B$ are linearly independent in the field of complex numbers. That is, there cannot exist a nonzero $c_0 \in {\mathbb C}^{1\times n}$ such that $c_0(\lambda I-A)^{-1}B \equiv 0$.
\end{proof}

{\bf\emph{Proof of Theorem \ref{main_theorem}:}} (Necessity)  The necessity of Condition 1) follows directly from \citet[Th. 1]{Y_Zhang_2016} and \citet[Th. 4]{L.Wa2016Controllability}.  The proof for the necessity of Condition 2) is quite standard in structured system theory by leveraging the relationship between global input-reachability of $\bar {\cal G}_{\rm sys}$ and reducibility of $[L,\Delta]$ \citep[see][Sec. 4]{generic}. The details are omitted due to space considerations.

(Sufficiency: controllability of a tree) We use mathematical induction to prove the sufficiency.
First, without losing of generality, assume that there is a spanning tree $\cal T$ with the topological order $(1,...,N)$ in ${\cal G}_{\rm sys}$ and $\delta_1\ne 0$. Suppose $l_{ij}=0$ for $(j,i)\notin E(\cal T)$.  Let $A_k$ be the submatrix of $A_{\rm sys}$ formed by its first $kn$ rows and $kn$ columns, and $B_k=[b^{\intercal}, 0_{1\times (k-1)n}]^{\intercal}$. Consider $A_1=A$, $B_1=b$. It is obvious that $(A_1,B_1)$ is controllable. Now suppose that $(A_i,B_i)$ is controllable for $i=1,...,k$. Let $A_{k+1}$ be partitioned as
{\small\begin{equation}\label{matrix_partion}A_{k+1}=\left[
        \begin{array}{cccc}
          A_{11} & 0 & 0 & 0\\
          A_{21} & A_{22} & 0 & 0\\
          A_{31} & A_{32} & A_{33} & 0\\
          0 & l_{k+1}bc & 0 & A_{44}
        \end{array}
      \right]
\end{equation}}where $A_{11}\in {\mathbb R}^{n_1\times n_1}$, $A_{22}\in {\mathbb R}^{n\times n}$, $A_{33}\in {\mathbb R}^{n_2\times n_2}$,  $n_1+n+n_2=kn$ with $n_1$ and $n_2$ being divisible by $n$,  and $A_{44}=A-p_{k+1}bc$, with $p_{k+1}\in {\mathbb R}$ being the weight of the edge connecting vertex $k+1$ and its parent in $\cal T$. Note that the first three row and three column blocks of $A_{k+1}$ form $A_k$.  We will show that, by suitably choosing $p_{k+1}$, $[A_{k+1}-\lambda I,B_{k+1}]$ is of full row rank for each $\lambda\in {\mathbb C}$, which means that $(A_{k+1},B_{k+1})$ is controllable by the PBH test. To this end, consider the following two cases:

Case i) $n_1\ne 0$: Note that $[A_{k+1}-\lambda I, B_{k+1}]$ reads as
{\small\[\left[
  \begin{array}{ccccc}
    A_{11}-\lambda I & 0 & 0 & 0 & \bar b \\
    A_{21} & A_{22}-\lambda I & 0 & 0 & 0 \\
    A_{31} & A_{32} & A_{33}-\lambda I & 0 & 0 \\
    0 & p_{k+1}bc & 0 & A_{44}-\lambda I & 0
  \end{array}
\right],\]}where $\bar b= [b^{\intercal},0_{1\times (n_1-n)}]^{\intercal}$.
If $\lambda \notin \sigma(A_{44})$, as $(A_{k},B_k)$ is controllable, it can be directly validated that ${\rm rank} [A_{k+1}-\lambda I, B_{k+1}]=(k+1)n$. Consider $\lambda \in \sigma(A_{44})$. Recall $A_{44}=A-p_{k+1}bc$ and $(A,b)$ is controllable meanwhile $(A,c)$ is observable. From Lemma \ref{BasicPoleAssign}, there exists suitable $p_{k+1}$, such that $\sigma(A_{k})\cap \sigma(A_{44})=\emptyset$. Using the Schur complement \citep{Geor1993Graph}, when $\lambda\notin \sigma(A_k)$,  $[A_{k+1}-\lambda I, B_{k+1}]$ is of full row rank, if and only if
\begin{equation}\label{matrix_process} \begin{array}{l}
[A_{44}-\lambda I,0]-[0,p_{k+1}bc,0](A_{k}-\lambda I)^{-1}{\tiny\left[
                                                 \begin{array}{cc}
                                                   0 & \bar b \\
                                                   0 & 0 \\
                                                   0 & 0 \\
                                                 \end{array}
                                               \right]} \\
 =[A-p_{k+1}bc- \lambda I, p_{k+1}bc(A_{22}-\lambda I)^{-1}A_{21}(A_{11}-\lambda I)^{-1}\bar b]
 \end{array}\end{equation} is of full row rank.  Note that $c(A_{22}-\lambda I)^{-1}A_{21}(A_{11}-\lambda I)^{-1}\bar b$ is a scalar, and $A-p_{k+1}bc$ can be seen as a state feedback with feedback matrix $p_{k+1}c$. As $(A,b)$ is controllable,  the aforementioned condition is satisfied, if $c(A_{22}-\lambda I)^{-1}A_{21}(A_{11}-\lambda I)^{-1}\bar b \ne 0$ and $p_{k+1}\ne 0$. The first part of the latter condition is equivalent to
{\small\begin{equation}\label{not_zero_equal} [0,-c]\left[
          \begin{array}{cc}
            A_{11}-\lambda I & 0 \\
            A_{21} & A_{22}-\lambda I \\
          \end{array}
        \right]^{-1}\left[                      \begin{array}{c}
                        \bar b \\
                        0 \\
                      \end{array}
                    \right]
\ne 0.\end{equation}}By Lemma \ref{ControlNotZero}, noting that ${\tiny{\left(\left[
          \begin{array}{cc}
            A_{11} & 0 \\
            A_{21} & A_{22} \\
          \end{array}
        \right], \left[
                      \begin{array}{c}
                        \bar b \\
                        0 \\
                      \end{array}
                    \right]\right)}}$ is controllable (as $(A_i,B_i)$ is controllable for $i=1,...,k$), there exist only a finite number of complex values $\lambda$ such that  (\ref{not_zero_equal}) cannot hold. Let $\Omega_k=\{\lambda \in {\mathbb C}: \lambda \notin \sigma(A_{k}), c(A_{22}-\lambda I)^{-1}A_{21}(A_{11}-\lambda I)^{-1}\bar b=0\}$, then $\Omega_k$ is a finite set. Therefore, from Lemma \ref{BasicPoleAssign}, by suitably choosing $p_{k+1}\ne 0$, one can always ensure $(\sigma(A_{k})\cup \Omega_k)\cap \sigma(A-p_{k+1}bc)=\emptyset$, making $(A_{k+1},B_{k+1})$ controllable.

        Case ii) $n_1=0$: Following similar arguments, one can choose $p_{k+1}\ne 0$ to ensure $(\sigma(A_{k})\cup \Omega_k)\cap \sigma(A-p_{k+1}bc)=\emptyset$ with $\Omega_k\doteq\{\lambda \in {\mathbb C}: \lambda \notin \sigma(A_{k}), c(A_{22}-\lambda I)^{-1}b = 0\}$, so that $p_{k+1}$ makes $(A_{k+1},B_{k+1})$ controllable.

(Controllability of the network) If ${\mathcal G}_{\rm sys}$ can be~decomposed into more than one disjoint trees all rooted at the driving vertices, let the weights of edges between any two trees be zero.  Then, each tree itself corresponds to a controllable system. Thus, the whole system is controllable.~\hfill $\square$

Following the proof of Theorem \ref{main_theorem},  we provide a deterministic procedure to generate a set of interaction weights for an NRCS to be controllable with given SISO subsystems. For simplicity of description, assume that ${\cal G}_{\rm sys}$ can be spanned by a tree $\cal T$ with the topological order $(1,...,N)$, while $\delta_1\ne 0$. Let $p_{k}$ be the weight of the edge between vertex $k$ and its parent in $\cal T$, $k=2,...,N$, and let the weights of edges not in $\cal T$ be zero.  For $k=1,...,N-1$,  $p_{k+1}$ can be recursively constructed in the following way, ~where $A_k$, $A_{11},...,A_{22}$ are defined in the proof of Theorem~\ref{main_theorem}:

1) partition $A_{k+1}$ according to (\ref{matrix_partion});

2) if $A_{11}$ is not empty, let $\Omega_k=\{\lambda \in {\mathbb C}: \lambda \notin \sigma(A_{k}), c(\\A_{22}-\lambda I)^{-1}A_{21}(A_{11}-\lambda I)^{-1}\bar b=0\}$, where $\bar b= [b^{\intercal},0_{1\times (n_1-n)}]^{\intercal}$, otherwise let $\Omega_k=\{\lambda \in {\mathbb C}: \lambda \notin \sigma(A_{k}), c(A_{22}-\lambda I)^{-1}b = 0\}$. Determine an $p_{k+1}\in {\mathbb R}$, such that $\sigma(A_{k})\cap \sigma(A-p_{k+1}bc)=\emptyset$ and $\Omega_k \cap \sigma(A-p_{k+1}bc)=\emptyset$.

Step 2) can be implemented by leveraging standard pole-assignment techniques for SISO systems \citep[Chap. 10]{Modern_Control_Ogata}. A corollary can be obtained from this procedure, which is vital in the proof of Theorem \ref{MIMO_equal} (see Section \ref{section_single}).

\begin{corollary} \label{no_repeated_L} Let $L$ be a Laplacian matrix of a graph $\cal G$ with $N$ vertices $\{1,...,N\}$. Suppose that $\cal G$ has a spanning tree rooted at vertex $1$.  Then, there exists a set of weights for $\cal G$ such that the associated $(L, {\bf e}^{N}_1)$ is controllable while $L$ has no repeated eigenvalues.
\end{corollary}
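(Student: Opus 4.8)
The plan is to specialize the constructive weight-design procedure developed for Theorem \ref{main_theorem} to the single-integrator subsystem, where $n=1$, $A=0$, $b=c=1$, and the single driving vertex is the root, so that $\Delta={\bf e}^N_1$. Under this identification one has $A_{\rm sys}=I_N\otimes A-L\otimes(bc)=-L$ and $B_{\rm sys}=\Delta\otimes b={\bf e}^N_1$. Since the columns of the controllability matrix of $(-L,{\bf e}^N_1)$ differ from those of $(L,{\bf e}^N_1)$ only by alternating signs, the two pairs have the same controllability-matrix rank; hence it suffices to make $(-L,{\bf e}^N_1)$ controllable while keeping the spectrum of $L$ simple.

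First I would fix a spanning tree $\cal T$ of $\cal G$ rooted at vertex $1$, assume without loss of generality (by permuting vertices $2,\dots,N$, which leaves ${\bf e}^N_1$ invariant) that $(1,\dots,N)$ is a topological order of $\cal T$, and set the weights of all edges outside $\cal T$ to zero. With this choice each vertex $k\ge 2$ has a unique in-neighbor, namely its parent, whose index is smaller than $k$, so $L$ becomes lower triangular with diagonal entries $(0,p_2,\dots,p_N)$, where $p_k$ is the weight of the tree edge entering vertex $k$ and the leading $0$ reflects that the root has no incoming edge. Consequently $\sigma(L)=\{0,p_2,\dots,p_N\}$, and the leading $k\times k$ submatrix $A_k$ of $A_{\rm sys}=-L$ used in the induction satisfies $\sigma(A_k)=\{0,-p_2,\dots,-p_k\}$.

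Next I would run the recursion of that procedure. The hypotheses of Lemma \ref{BasicPoleAssign} hold trivially for $(0,1,1)$, and at step $k$ the block $A_{44}=A-p_{k+1}bc=-p_{k+1}$ contributes the single eigenvalue $-p_{k+1}$; choosing $p_{k+1}\ne 0$ outside the finite set $\sigma(A_k)\cup\Omega_k$ makes $(A_{k+1},B_{k+1})$ controllable exactly as in the proof of Theorem \ref{main_theorem}. The key observation is that this controllability construction already enforces $\sigma(A_k)\cap\sigma(A_{44})=\emptyset$, i.e. $-p_{k+1}\notin\{0,-p_2,\dots,-p_k\}$, so the newly added eigenvalue is automatically distinct from all previous ones. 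Carrying the induction to $k=N$ yields a controllable $(A_N,B_N)=(-L,{\bf e}^N_1)$ whose eigenvalues $\{0,-p_2,\dots,-p_N\}$ are pairwise distinct, which is exactly the claim after the sign identification.

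The only point needing care is that the two requirements, namely controllability of $(L,{\bf e}^N_1)$ and simplicity of $\sigma(L)$, be met by one common weight assignment. This potential conflict does not arise because, in the scalar case, the spectral-separation condition that the controllability construction imposes at each step coincides with the distinctness of the accumulated eigenvalues; thus no constraint beyond those of Theorem \ref{main_theorem} is introduced, and the finite exclusion sets still leave infinitely many admissible values of each $p_{k+1}$.
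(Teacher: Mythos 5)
Your proposal is correct and follows exactly the paper's route: the paper's own proof is the one-line remark that setting $A=0$, $b=c=1$ in the weight-design procedure of Theorem \ref{main_theorem} yields the desired $L$. You have simply filled in the details the paper leaves implicit, in particular the observation that with these scalar data $L$ is lower triangular with diagonal $(0,p_2,\dots,p_N)$ and that the spectral-separation condition $\sigma(A_k)\cap\sigma(A-p_{k+1}bc)=\emptyset$ already enforced by the procedure is precisely the distinctness of the accumulated eigenvalues.
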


\begin{proof} By setting $A=0\in{\mathbb R}$, $b=c=1$, the previous procedure provides a way to construct such an $L$.
\end{proof}
{
\begin{remark} Example 2 of \cite{menara2018structural} has discussed the controllability of discrete-time consensus networks and shown that it remains a generic property. A discrete-time consensus network of single-integrators has a state-transition matrix whose every row sum equals one (zero in the continuous-time case).  If we use $I_N-L$ to denote such a state-transition matrix, where $L$ is the Laplacian matrix of the associated $N$-vertex network, then by setting $A=1$ Theorem \ref{main_theorem} indicates that global-input reachability is still necessary and sufficient for structural controllability of the discrete-time network (note that the PBH test for discrete-time systems is the same as that for continuous-time ones). This is consistent with \citet[Th. 3.3]{menara2018structural} in providing necessary conditions for the structural controllability of undirected consensus networks.
\end{remark}}

\section{NRCSs with MIMO subsystems via equally \\ weighted channels} \label{section_single} 
In this section, we generalize the results in the above section to the case with MIMO subsystems via equally weighted channels, i.e., the interaction fashion (b). For notation simplicity, let $L_1=\cdots=L_r=L=[-l_{ij}]$, and rewrite (\ref{lump_pp_MIMO}) as $A_{\rm sys}=I\otimes A-  L\otimes BC, B_{\rm sys}=\Delta \otimes B$.
We will give some testable conditions for structural controllability. Our approach is based on the mode peculiarity of $A_{\rm sys}$.

{\begin{proposition}\label{new_proposition_nopara}  Consider the NRCS (\ref{sub_dynamic})--(\ref{sub_interaction}) with $L_1=\cdots=L_r$. Suppose that ${{\cal G}}_{\rm sys}$ has a spanning tree rooted at one driving vertex. Then, there is no parameter-dependent uncontrollable mode for this system. In other words, under that condition, the NRCS is structurally controllable, if and only if for each $\lambda_i\in \sigma(A)$, the following matrix has full row generic rank.
\begin{equation}\label{fullrank} [\lambda_iI_{nN}-I_{N}\otimes A+ L\otimes BC, \Delta\otimes B].\end{equation}
\end{proposition}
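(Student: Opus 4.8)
# Proof Proposal

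The plan is to invoke Lemma~\ref{Theorem of Linear Parameterization} after writing the NRCS in the rank-one parameterization form of~(\ref{Linear Parameterization}). The key observation is that with $L_1=\cdots=L_r=L=[-l_{ij}]$, each free interaction weight $l_{ij}$ (for $(j,i)\in{\cal E}_{xx}$) appears in $A_{\rm sys}=I_N\otimes A - L\otimes BC$ in the block position indexed by subsystems $(i,i)$ and $(i,j)$, multiplying the matrix $BC=\sum_{k=1}^r b_kc_k$. Since $B_{\rm sys}=\Delta\otimes B$ contains no weights, each $l_{ij}$ has a coefficient matrix of the form ${\bf e}^N_i({\bf e}^N_j-{\bf e}^N_i)^{\intercal}\otimes BC$, whose rank equals ${\rm rank}(BC)$, generally larger than one. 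First I would therefore explain how to split each such term into a sum of rank-one parameterized pieces, or alternatively reduce to the scalar weight parameter structure so that the cycle-reachability criterion of Lemma~\ref{Theorem of Linear Parameterization}, part~1) applies.

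The core of the argument is to show that the auxiliary graph ${\cal G}_{\rm aux}(G_{zv}(\lambda),G_{zu}(\lambda))$ has every cycle input-reachable under the spanning-tree hypothesis. First I would compute the relevant transfer matrices: here $A_0=I_N\otimes A$ and $B_0=\Delta\otimes B$, so $(\lambda I - A_0)^{-1}=I_N\otimes(\lambda I_n - A)^{-1}$ factors through the Kronecker structure. The directions $g_i,h_{1i}$ associated with the weights $l_{ij}$ carry the vertex-level structure ${\bf e}^N_i,{\bf e}^N_j$ tensored with the subsystem-level vectors coming from $b_k,c_k$. The crucial point I expect is that the zero-nonzero pattern of the resulting auxiliary graph is essentially that of $\bar{\cal G}_{\rm sys}$ at the network level, up to the subsystem-level transfer function $c_k(\lambda I_n-A)^{-1}b_{k'}$ acting as edge weights; by Lemma~\ref{ControlNotZero} these are generically nonzero provided $(A,B)$ is controllable and each $c_k\neq 0$. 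Then a cycle in ${\cal G}_{\rm aux}$ corresponds to a cycle in ${\cal G}_{\rm sys}$, and a spanning tree rooted at a driving vertex guarantees every vertex (hence every cycle) is input-reachable, so part~1) of Lemma~\ref{Theorem of Linear Parameterization} yields the absence of parameter-dependent uncontrollable modes.

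The main obstacle I anticipate is the rank-larger-than-one coefficient matrices: the clean cycle criterion in Lemma~\ref{Theorem of Linear Parameterization} is stated for the rank-one parameterization~(\ref{Linear Parameterization}), whereas here $BC$ need not be rank one, and the shared weight $l_{ij}$ couples the $(i,i)$ and $(i,j)$ blocks simultaneously (the zero row sum constraint). I would handle this by treating each scalar $l_{ij}$ as a single indeterminate whose coefficient decomposes as $\sum_{k=1}^r ({\bf e}^N_i\otimes b_k)({\bf e}^N_j-{\bf e}^N_i)^{\intercal}\otimes c_k$, i.e.\ as a sum of rank-one terms that all \emph{share the same parameter} $l_{ij}$; this repeated-parameter situation is precisely what the parameter-dependent framework of Definitions~\ref{p-parameterized}--\ref{p-parameterized2} is designed to accommodate. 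The delicate step is verifying that this repetition does not create spurious cancellations in the greatest common divisor $\Gamma([sI-A(p),B(p)])$ that could manufacture an $(s,p)$-factor; I expect to argue that generically (for almost all weights) the Kronecker factorization keeps the network-level and subsystem-level contributions separated, so an uncontrollable mode depending on the weights would force a non-input-reachable cycle, contradicting the spanning-tree assumption. Once part~1) is established, the ``in other words'' reformulation follows immediately from the PBH-based equivalence stated after Definition~\ref{p-parameterized2}: structural controllability then reduces to the absence of fixed uncontrollable modes, i.e.\ to the full generic row rank of~(\ref{fullrank}) at each fixed mode $\lambda_i\in\sigma(A)$, which is exactly part~2) of Lemma~\ref{Theorem of Linear Parameterization}.
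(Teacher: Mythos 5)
There is a genuine gap, and it sits exactly where you flagged ``the main obstacle.'' Your plan is to apply the cycle-reachability criterion of Lemma~\ref{Theorem of Linear Parameterization}, but that lemma is stated for the parameterization~(\ref{Linear Parameterization}) in which each indeterminate $p_i$ occurs exactly once, with a rank-one coefficient $g_ih_{1i}^{\intercal}$, and the $p_i$ are mutually independent. In fashion (b) the weight $l_{ij}$ carries the coefficient ${\bf e}^{N}_i({\bf e}^{N}_j-{\bf e}^{N}_i)^{\intercal}\otimes BC$ of rank ${\rm rank}(BC)$, and your proposed fix --- writing it as a sum over $k$ of rank-one terms that \emph{share} the same parameter $l_{ij}$ --- produces repeated parameters, which is outside the hypotheses of Lemma~\ref{Theorem of Linear Parameterization}. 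The ``delicate step'' of showing that this repetition creates no spurious $(s,p)$-factors in $\Gamma([sI-A(p),B(p)])$ is precisely the content you would need to prove, and you only assert it (``I expect to argue that generically\dots''). The paper itself warns, in Section~\ref{section_II} and in the remark following Theorem~\ref{MIMO}, that the rank-one machinery cannot be directly adopted in fashion (b) and that this is exactly what separates fashions (b) and (c); the cycle criterion is used only for fashion (c), where the weights $l^{[k]}_{ij}$ are independent across $k$ and each genuinely has a rank-one coefficient.

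The paper's actual proof avoids Lemma~\ref{Theorem of Linear Parameterization} entirely and exploits $L_1=\cdots=L_r=L$ spectrally. First, Lemma~\ref{notfixedratio} counts the parameter-dependent modes: Jordan-decomposing $L$ gives $\det(sI-A_{\rm sys})=\prod_{i}[\det(sI-A+\lambda_i BC)]^{\hat r_i}$ with $\lambda_i$ the eigenvalues of $L$, and comparison with the lower block-triangular system built from the tree Laplacian $L^*$ in~(\ref{special_Laplician}) pins down $\deg\beta(s,p)=(N-1)\deg\beta_0(s,p_i)$. This counting step matters: it certifies that setting the off-tree weights to zero does not convert any parameter-dependent mode into a fixed one, a failure mode your outline never addresses. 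Then, for $\bar A_{\rm sys}=I_N\otimes A-L^*\otimes BC$, each parameter-dependent mode $\lambda_i$ of $A-p_iBC$ admits the explicit left eigenvector $w_i\otimes\mu_i$; the PBH test together with Corollary~\ref{no_repeated_L} forces $\mu_i^{\intercal}B=0$, hence $\mu_i^{\intercal}(\lambda_iI-A)=0$ and $\lambda_i\in\sigma(A)$, contradicting parameter dependence. To salvage your route you would have to establish a version of Lemma~\ref{Theorem of Linear Parameterization} valid for repeated or high-rank parameters, which is a substantial separate undertaking; the Kronecker eigenvector argument is the workable path here.
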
}
\begin{proof} See the appendix. \end{proof}

{Note that the existence of a spanning tree is necessary for structural controllability in the single-input case (i.e., $|{\cal I}_u|=1$). The above proposition indicates that, under this condition, all parameter-dependent modes are generically controllable. As such, verifying structural controllability is transformed into the problem of generic rank verifications at some fixed modes. One could resort to the matrix net techniques in \cite{Anderson_1982} to check the full row generic rank of (\ref{fullrank}) at each $\lambda_i\in \sigma(A)$. It seems nontrivial to extend this proposition to the case with multiple inputs. Nevertheless, based on Proposition \ref{new_proposition_nopara}, the following theorem gives a sufficient condition for structural controllability, which avoids checking the generic rank of (\ref{fullrank}) at the global system level.}

\begin{theorem} \label{MIMO_equal} Given the NRCS (\ref{sub_dynamic})--(\ref{sub_interaction}) with $L_1=\cdots=L_r$, suppose that $\bigcap \nolimits_{l\in {\mathbb R}}\sigma(A+lBC)=\emptyset$. Then, this system is structurally controllable, if and only if $\bar {\cal G}_{\rm sys}$ is globally input-reachable.
\end{theorem}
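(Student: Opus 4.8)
The plan is to prove the two implications separately, letting Proposition~\ref{new_proposition_nopara} carry the entire burden of the parameter-dependent modes so that only the fixed modes $\sigma(A)$ remain to be inspected. For \emph{necessity} of global input-reachability I would reuse the standard reducibility argument from the proof of Theorem~\ref{main_theorem}: if some vertex of $\bar{\cal G}_{\rm sys}$ is input-unreachable, then a permutation $P$ brings $[L,\Delta]$ to the block-lower-triangular form with a vanishing leading block of $\Delta$, and lifting it to $P\otimes I_n$ makes $(A_{\rm sys},B_{\rm sys})$ a block-lower-triangular pair whose top block receives no input; by the PBH test this pair is uncontrollable for \emph{every} weight assignment. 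Hence global input-reachability is necessary independently of the standing hypothesis.

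For \emph{sufficiency}, I would first invoke the structural fact recalled in the preliminaries: global input-reachability lets ${\cal G}_{\rm sys}$ be decomposed into vertex-disjoint trees, each rooted at a single driving vertex. Setting all inter-tree weights to zero turns both $A_{\rm sys}$ and $B_{\rm sys}$ into a direct sum over the trees with pairwise-disjoint input columns, so the overall pair is controllable if and only if each tree-block is. It therefore suffices to treat a single tree ${\cal T}$ on $\{1,\dots,N\}$ rooted at the driving vertex $1$, with $\Delta={\bf e}^{N}_1$. Since ${\cal T}$ is a spanning tree rooted at a driving vertex, Proposition~\ref{new_proposition_nopara} applies and reduces structural controllability to showing that $[\lambda I-I_N\otimes A+L\otimes BC,\ {\bf e}^{N}_1\otimes B]$ attains full row generic rank at every $\lambda\in\sigma(A)$.

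The core of the argument verifies this rank condition by diagonalizing $L$ and exploiting the hypothesis $\bigcap_{l\in{\mathbb R}}\sigma(A+lBC)=\emptyset$ (equivalently, after the reparametrization $\mu=-l$, $\bigcap_{\mu}\sigma(A-\mu BC)=\emptyset$). By Corollary~\ref{no_repeated_L} I can choose weights for ${\cal T}$ so that $L$ has $N$ distinct eigenvalues $\mu_1,\dots,\mu_N$ (one of which, say $\mu_{i_0}$, equals $0$ because $L{\bf 1}_N=0$ for the all-ones vector ${\bf 1}_N$) and $(L,{\bf e}^{N}_1)$ is controllable; write $T^{-1}LT={\bf diag}\{\mu_i|_{i=1}^N\}$. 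Then $(T\otimes I_n)^{-1}A_{\rm sys}(T\otimes I_n)={\bf diag}\{A-\mu_iBC|_{i=1}^N\}$ and the input matrix becomes $(T^{-1}{\bf e}^{N}_1)\otimes B$, whose $i$th block row is $(T^{-1}{\bf e}^{N}_1)_i\,B$; controllability of $(L,{\bf e}^{N}_1)$ forces, via PBH, every scalar $(T^{-1}{\bf e}^{N}_1)_i$ to be nonzero. I then examine a left null vector $w=(w_1,\dots,w_N)$ of the transformed $[\lambda I-A_{\rm sys},B_{\rm sys}]$, where each $w_i$ is either zero or a genuine left eigenvector of $A-\mu_iBC$ at $\lambda$. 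Two consequences of the hypothesis close the proof. First, no left eigenvector $w$ of $A$ at any $\lambda\in\sigma(A)$ can satisfy $wB=0$, for otherwise $w(A+lBC)=\lambda w$ for all $l$ would place $\lambda$ in $\bigcap_{l}\sigma(A+lBC)$; equivalently, the hypothesis already guarantees that $(A,B)$ is controllable. Second, for each fixed $\lambda\in\sigma(A)$ the polynomial $\det(\lambda I-A+\mu BC)$ is not identically zero in $\mu$ (again by the hypothesis), hence has finitely many roots, one of which is $\mu=0$; choosing the weights generically I can ensure that the only $\mu_i$ lying among these roots is $\mu_{i_0}=0$. Consequently $w_i=0$ for $i\ne i_0$, the input equation collapses to $(T^{-1}{\bf e}^{N}_1)_{i_0}\,w_{i_0}B=0$ with a nonzero scalar, forcing $w_{i_0}B=0$; since $w_{i_0}$ is a left eigenvector of $A-\mu_{i_0}BC=A$ at $\lambda$, this contradicts controllability of $(A,B)$ unless $w=0$. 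Thus the matrix has full row generic rank at every $\lambda\in\sigma(A)$, so the tree, and therefore the whole NRCS, is structurally controllable.

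I expect the main obstacle to lie in this last step: making the diagonalization argument airtight requires securing several generic conditions \emph{simultaneously} (distinct Laplacian eigenvalues, controllability of $(L,{\bf e}^{N}_1)$, and avoidance of the finitely many nonzero roots of $\det(\lambda I-A+\mu BC)$ for all $\lambda\in\sigma(A)$), and then handling possible repeated eigenvalues of $A$ inside the $\mu_{i_0}=0$ block when passing from the left-null-vector relation back to the PBH controllability of $(A,B)$.
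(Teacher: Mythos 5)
Your proposal is correct and shares the paper's overall skeleton (necessity via the reducibility/permutation argument, sufficiency via decomposition into driver-rooted trees, invoking Proposition~\ref{new_proposition_nopara} to dispose of parameter-dependent modes, and then checking generic full row rank of (\ref{fullrank}) only at $\lambda\in\sigma(A)$), but your treatment of that final rank check is genuinely different from, and heavier than, the paper's. The paper never diagonalizes $L$: it simply sets the non-tree weights to zero, orders the vertices by the topological order of $\cal T$, and observes that $[\lambda_j I-A_{\rm sys}, B_{\rm sys}]$ is then block \emph{lower triangular}, with first diagonal block $[\lambda_j I-A, B]$ (full rank since the hypothesis forces $(A,B)$ controllable) and remaining diagonal blocks $\lambda_j I-A+p_iBC$, each of full generic rank in the single free parameter $p_i$ precisely because $\bigcap_{l}\sigma(A+lBC)=\emptyset$. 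That one observation finishes the proof with no need for distinct Laplacian eigenvalues, controllability of $(L,{\bf e}^N_1)$, Corollary~\ref{no_repeated_L}, or the left-null-vector analysis. Your diagonalization route does work — the eigenvalues of the tree-restricted $L$ are exactly $0,p_2,\dots,p_N$, so distinctness, controllability of $(L,{\bf e}^N_1)$, and avoidance of the finitely many nonzero roots of $\det(\lambda I-A+\mu BC)$ over $\lambda\in\sigma(A)$ are all simultaneously achievable generic conditions, and your two uses of the hypothesis (controllability of $(A,B)$, and non-vanishing of $\det(\lambda I-A+\mu BC)$ in $\mu$) are sound — but the simultaneous-genericity bookkeeping you flag at the end, as well as your worry about repeated eigenvalues of $A$ in the $\mu_{i_0}=0$ block (a non-issue, since PBH only needs that no nonzero left null vector of $\lambda I-A$ annihilates $B$), are exactly the complications the paper's triangularization sidesteps. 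In short: both arguments are valid; the paper's buys simplicity by exploiting the topological order directly, while yours is closer in spirit to the spectral decompositions used for networked MIMO systems (and to the proof of Proposition~\ref{new_proposition_nopara} itself) and would generalize more naturally to non-triangular weight patterns.
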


\begin{proof} (Necessity) The necessity follows similar arguments to those of the proof for Theorem \ref{main_theorem}. 

(Sufficiency) First, assume that ${\cal G}_{\rm sys}$ has a spanning tree with the topological order $(1,..., N)$ while $\delta_1\ne 0$. Denote this tree by $\cal T$. From Proposition \ref{new_proposition_nopara}, system (\ref{sub_dynamic})--(\ref{sub_interaction}) has no parameter-dependent uncontrollable modes. To show this system has no fixed uncontrollable modes, it suffices to show that (\ref{fullrank}) has full row generic rank for each eigenvalue of $A$.  Let the weight of the edge connecting vertex $i$ and its parent by $p_i$, $i=2,...,N$, and $p_1\equiv 0$, while weights of edges not in $E({\cal T})$ be zero. Then, the $i$th diagonal block of $A_{\rm sys}$ can be written as $A-p_iBC$. Because $\bigcap \nolimits_{l\in {\mathbb R}}\sigma(A+lBC)=\emptyset$, we have ${\rm grank}[\lambda_jI-A+p_iBC]=n$, and ${\rm rank}[\lambda_jI-A,B]=n$  for each $\lambda_j\in \sigma(A)$ (as that condition requires $(A,B)$ to be controllable). After some row and column permutations, $[\lambda_j I- A_{\rm sys}, B_{\rm sys}]$ has a lower block triangular form, whose $1$st diagonal block, being $[\lambda_j I-A, B]$, and $2$nd to $N$th diagonal blocks, being $\lambda_jI-A+p_iBC$, are all of full row generic rank. Therefore, ${\rm grank}[\lambda_j I- A_{\rm sys}, B_{\rm sys}]=nN$.

If ${\cal G}_{\rm sys}$ can be decomposed into more than one disjoint trees all rooted at the driving vertices, let the edges connecting these trees have weight zero. It turns out each tree itself corresponds to a structurally controllable system. \! Thereby, the whole system is structurally controllable.~\end{proof}

{ From} \citet[Lem. 4.1]{Anderson_1982}, $\bigcap \nolimits_{l\in {\mathbb R}}\sigma(A+lBC)=\emptyset$, if and only if the following matrix has full rank for each $\lambda_0\in \sigma(A)$
\[{\footnotesize\left[
  \begin{array}{ccccc}
    A-\lambda_0I & BC & 0 & \cdots & 0 \\
    0 & A-\lambda_0I & BC & \cdots & 0 \\
    \vdots &  &  & \ddots & \vdots  \\
    0 & 0 & 0 & \cdots & BC \\
    0 & 0 & 0 & \cdots & A-\lambda_0I \\
  \end{array}
\right]\in {\mathbb R}^{(\bar l+1)n\times (\bar l+1)n}}
,\]where $\bar l\doteq {\rm rank}BC$. {When such condition is not satisfied, suppose $\Psi\doteq \bigcap \nolimits_{l\in {\mathbb R}}\sigma(A+lBC)$. From Proposition \ref{new_proposition_nopara} and Theorem \ref{MIMO_equal}, provided that there is a spanning tree rooted at one driving vertex in ${\cal G}_{\rm sys}$, to verify structural controllability one only needs to check the generic rank of (\ref{fullrank}) at each $\lambda_i\in \Psi\subseteq \sigma(A)$.}

\begin{remark} It should be noted that the condition  $\bigcap \nolimits_{l\in {\mathbb R}}\sigma(A+lBC)=\emptyset$ has also been proposed in \cite{xue2018modal}. However, different from \cite{xue2018modal} where the interaction weights are fixed and form a diagonalizable matrix, in this paper, we study controllability in the generic sense, where the weights are indeterminates without the diagonalization assumption. Our results directly link to the network topology's graphical properties, rather than the spectrum of the matrix formed by the interaction weights.
\end{remark}

\section{NRCSs with MIMO Subsystems via differently weighted channels} \label{MIMO_section}
In this section, we consider NRCSs with MIMO subsystems via differently weighted channels, i.e., the interaction fashion (c). Note that characterizing modes of $A_{\rm sys}$ is vital to Theorem \ref{MIMO_equal}. However, it~seems unlikely to implement similar analysis if $L_1,...,L_r$ are nonidentical. Our derivations are based on Lemma~\ref{Theorem of Linear Parameterization}.

 For Lemma \ref{Theorem of Linear Parameterization} to be used, we need to linearly parameterize $L_i$ as in (\ref{Linear Parameterization}).
 To this end, define the incidence matrix $K_I$ of ${\cal G}_{\rm sys}$ as the $|{\cal E}_{xx}|\times |{\cal V}_{x}|$ matrix such that for the $k$th edge $e_k=(l,j)$, $[K_I]_{kl}=1$ and $[K_I]_{kj}=-1$. Afterwards, define a $|{\cal V}_{x}|\times |{\cal E}_{xx}|$ matrix $K$ as $K_{jk}=1$ if ${[{K_I}]_{kj}} =  - 1$, and otherwise $K_{jk}=0$.
It turns out $L_i=-K\Lambda_iK_I$, where $\Lambda_i$ is a diagonal matrix whose $k$th diagonal equals the weight of $e_k$ associated with $L_i$. We then have
\begin{equation} \label{linear_pa}
\begin{split}
[A_{\rm sys},B_{\rm sys}]=[I\otimes A, \Delta \otimes B]+[K\otimes b_1,...,K\otimes b_r]\\
{\bf diag}\{\Lambda_1,...,\Lambda_r\}[[K_I^{\intercal}\otimes c_1^{\intercal},...,K_I^{\intercal}\otimes c_r^{\intercal}]^{\intercal},0].
\end{split}
\end{equation}

To extend the SISO case to the MIMO one, we  draw the following notion from decentralized stabilization. 

\begin{definition} \citep{Fixed_Mode} \label{fix_mode} Given~$A\in {\mathbb R}^{n\times n}$, $B\in {\mathbb R}^{n\times r}$ and $C\in {\mathbb R}^{r\times n}$, let ${\cal K} \subseteq {\mathbb R}^{r\times r}$ be the set of all $r\times r$ diagonal matrices. Then $(A,B,C)$ is said to have no fixed mode with respect to $\cal K$, if $\bigcap \nolimits_{K\in {\cal K}} \sigma(A+BKC)\!\!=\!\!\emptyset$.
\end{definition}

To proceed with our derivations, we need the following result, whose proof can be found in  \cite{zhang2019Structural_arxiv}.
\begin{lemma} \label{lemma_equal_graph}
Given matrices $H\in {\mathbb M}^{k\times n}$, $P \in {\mathbb M}^{k\times m}$, $G\in {\mathbb M}^{n\times k}$, and a diagonal matrix $\Lambda \in {\mathbb M}^{n\times n}$ whose diagonal entries are free parameters, suppose the following condition holds:  $[GH]_{ij}\ne 0$ (resp. $[GP]_{ij}\ne 0$) whenever there exists one $l\in \{1,...,k\}$ such that $G_{il}\ne 0$ and $H_{li}\ne 0$ (resp. $P_{li}\ne 0$). Then, every cycle is input-reachable in ${\cal G}_{\rm aux}(GH,GP)$, if and only if such property holds in ${\cal G}_{\rm aux}(H\Lambda G,P)$.
\end{lemma}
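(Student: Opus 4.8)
The plan is to realize both auxiliary graphs as two different two-step ``projections'' of a single bipartite digraph, and then to reduce the cycle/input-reachability equivalence to an elementary property of directed cycles. First I would build a bipartite auxiliary digraph ${\cal B}$ on the vertex classes $A=\{a_1,\dots,a_n\}$, $B=\{b_1,\dots,b_k\}$ together with the input vertices $Z=\{z_1,\dots,z_m\}$, whose arcs are $a_i\to b_l$ whenever $H_{li}\ne 0$, $b_l\to a_j$ whenever $G_{jl}\ne 0$, and $z_i\to b_l$ whenever $P_{li}\ne 0$. The reason for this construction is that, with the transpose conventions of ${\cal G}_{\rm aux}$, the relevant matrix entries are exactly the length-two walks of ${\cal B}$: $[GH]_{ji}=\sum_l G_{jl}H_{li}$ enumerates walks $a_i\to b_l\to a_j$, $[GP]_{ji}=\sum_l G_{jl}P_{li}$ enumerates walks $z_i\to b_l\to a_j$, and $[H\Lambda G]_{l'l}=\sum_i H_{l'i}\Lambda_{ii}G_{il}$ enumerates walks $b_l\to a_i\to b_{l'}$.

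Next I would establish an exact edge correspondence. Invoking the hypothesized condition (which rules out cancellation in the entries of $GH$ and $GP$), an arc $a_i\to a_j$ (resp. $z_i\to a_j$) is present in ${\cal G}_{\rm aux}(GH,GP)$ if and only if some walk $a_i\to b_l\to a_j$ (resp. $z_i\to b_l\to a_j$) exists in ${\cal B}$. On the other side, because the diagonal entries of $\Lambda$ are free indeterminates, the polynomial $[H\Lambda G]_{l'l}$ vanishes identically if and only if no walk $b_l\to a_i\to b_{l'}$ exists, so an arc $b_l\to b_{l'}$ of ${\cal G}_{\rm aux}(H\Lambda G,P)$ corresponds exactly to such a walk, while its input arcs $z_i\to b_l$ coincide verbatim with those of ${\cal B}$. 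Consequently ${\cal G}_{\rm aux}(GH,GP)$ is precisely the contraction of ${\cal B}$ onto $A\cup Z$ obtained by splicing out the $B$-vertices, and ${\cal G}_{\rm aux}(H\Lambda G,P)$ is the contraction onto $B\cup Z$ obtained by splicing out the $A$-vertices. Because each projected arc lifts to a bipartite walk and each bipartite walk projects back, a vertex lies on a directed cycle of a projection if and only if the corresponding vertex of ${\cal B}$ lies on a directed cycle of ${\cal B}$ (the shortest closed walk returning to a vertex is a simple cycle through it), and a vertex of a projection is input-reachable if and only if the corresponding vertex of ${\cal B}$ is reachable from $Z$.

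Finally I would close the argument using two facts about ${\cal B}$. Since ${\cal B}$ is bipartite with no arcs inside $A$ or inside $B$, every directed cycle of ${\cal B}$ alternates between the two classes and hence contains at least one vertex of each. Moreover, reachability from $Z$ is an all-or-nothing property along any directed cycle: if one vertex of the cycle is reachable from $Z$, then following the cycle arcs shows that every vertex of it is. Combining these, a cycle of ${\cal B}$ has all its $A$-vertices reachable from $Z$ if and only if it has all its $B$-vertices reachable from $Z$, and each of these holds if and only if all vertices of the cycle are reachable. Recalling that ``every cycle is input-reachable'' in a digraph is the same as ``every vertex lying on a directed cycle is input-reachable,'' the correspondences of the previous paragraph make both ``every cycle is input-reachable in ${\cal G}_{\rm aux}(GH,GP)$'' and ``every cycle is input-reachable in ${\cal G}_{\rm aux}(H\Lambda G,P)$'' equivalent to the single statement ``every vertex of ${\cal B}$ lying on a directed cycle is reachable from $Z$,'' which yields the claim.

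The step I expect to be the main obstacle is the edge-correspondence bookkeeping in the second paragraph: one must use the no-cancellation hypothesis precisely so that the zero/nonzero patterns of $GH$ and $GP$ faithfully mirror the existence of bipartite walks (an uncontrolled cancellation could delete an arc that the walk structure predicts, breaking the projection picture), and separately exploit the freeness of the diagonal $\Lambda$ to obtain the same faithfulness for $H\Lambda G$ automatically. Keeping the transpose index conventions of ${\cal G}_{\rm aux}$ consistent throughout, and verifying that the input vertices attach to the $B$-side in both projections, is the delicate part of the verification.
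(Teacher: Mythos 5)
The paper does not actually prove this lemma in its body: it states the result and defers the proof to the arXiv companion \cite{zhang2019Structural_arxiv}, so a direct comparison with ``the paper's own proof'' is not possible. Judged on its own, your argument is correct and complete. The bipartite lift ${\cal B}$ is exactly the right object: the no-cancellation hypothesis guarantees that the zero/nonzero pattern of $GH$ and $GP$ coincides with the existence of length-two walks $a\to b\to a$ and $z\to b\to a$, while for $H\Lambda G$ no extra hypothesis is needed because $[H\Lambda G]_{l'l}=\sum_i H_{l'i}\Lambda_{ii}G_{il}$ is a polynomial in distinct indeterminates $\Lambda_{ii}$ and hence vanishes identically only when every term does. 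You handle the two points that could trip this argument up: (i) the reduction of ``closed walk through a vertex'' to ``simple cycle through that vertex'' (needed because lifted/projected cycles are a priori only closed walks, and because self-loops of ${\cal G}_{\rm aux}(GH,GP)$ correspond to $2$-cycles of ${\cal B}$), and (ii) the all-or-nothing nature of input-reachability along a directed cycle, which together with bipartiteness makes ``all $A$-vertices on cycles are $Z$-reachable'' equivalent to ``all $B$-vertices on cycles are $Z$-reachable.'' You also correctly read the hypothesis as the intended no-cancellation condition despite the index slip ($H_{li}$ versus $H_{lj}$) in the statement, and you keep the transpose convention ${\cal E}_{HH}=\{(v_i,v_j):H_{ji}\ne 0\}$ consistent throughout. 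I see no gap.
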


\begin{theorem}\label{MIMO}
For the NRCS (\ref{sub_dynamic})--(\ref{sub_interaction}) with MIMO subsystems via differently weighted channels, suppose that $c_i\ne 0$ for $i=1,...,r$.  The following statements are true:

{1) If $(A,B)$ is controllable and $\bar {\cal G}_{\rm sys}$ is globally input-reachable, then there is no parameter-dependent uncontrollable mode;}

2) Suppose that $(A,B,C)$ has no fixed mode w.r.t. $\cal K$. Then, the networked system is structurally controllable, if and only if $\bar {\cal G}_{\rm sys}$ is globally input-reachable.
\end{theorem}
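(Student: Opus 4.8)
The plan is to cast the pair $[A_{\rm sys},B_{\rm sys}]$ into the rank-one parameterized form (\ref{Linear Parameterization}) through the incidence factorization (\ref{linear_pa}), and then to invoke Lemma \ref{Theorem of Linear Parameterization}: its statement 1) rules out parameter-dependent uncontrollable modes (this gives statement 1) of the theorem), while combining both of its statements gives structural controllability (statement 2)). Reading off (\ref{linear_pa}), I would set $A_0=I_N\otimes A$, $B_0=\Delta\otimes B$, take the generators from the columns of $[K\otimes b_1,\dots,K\otimes b_r]$ and the left vectors from the columns of $[K_I^{\intercal}\otimes c_1^{\intercal},\dots,K_I^{\intercal}\otimes c_r^{\intercal}]$, with the $B$-part vectors equal to zero. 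A direct Kronecker computation then produces the two gain matrices of Lemma \ref{Theorem of Linear Parameterization} in closed form, namely $G_{zv}(\lambda)=T(\lambda)\otimes(K_IK)$ and a companion $G_{zu}(\lambda)$ assembled from $K_I\Delta$ and the rows of $T(\lambda)$, where $T(\lambda)\doteq C(\lambda I-A)^{-1}B$ is the $r\times r$ subsystem transfer matrix.

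For statement 1), by item 1) of Lemma \ref{Theorem of Linear Parameterization} it suffices to show that, under global input-reachability of $\bar{\cal G}_{\rm sys}$, every cycle of ${\cal G}_{\rm aux}(G_{zv}(\lambda),G_{zu}(\lambda))$ is input-reachable, and I would separate the two layers visible in the formulas. The channel layer is governed by the zero pattern of $T(\lambda)$: since $(A,B)$ is controllable and every $c_k\ne 0$, Lemma \ref{ControlNotZero} gives $c_k(\lambda I-A)^{-1}B\not\equiv 0$, so each row of $T(\lambda)$ is nonzero and every channel is reachable from the all-channel injection carried by $B$. The edge layer is governed by $K_IK$ (internal) and $K_I\Delta$ (input), which live in the edge space of ${\cal G}_{\rm sys}$; to transfer cycle-input-reachability from this edge space to the vertex space of $\bar{\cal G}_{\rm sys}$, I would apply Lemma \ref{lemma_equal_graph} with $G$ and $H$ drawn from the incidence factors $K$ and $K_I$ and with $P$ drawn from $K_I\Delta$. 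Its no-cancellation hypothesis is exactly the self-loop-preservation property that the zero row sums of the Laplacians force (the self-loop at each in-neighbour-bearing vertex survives, as already noted after Theorem \ref{main_theorem}), so cycle-input-reachability of the edge-space graph coincides with that of the vertex-space graph, which global input-reachability of $\bar{\cal G}_{\rm sys}$ then supplies.

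For statement 2), the necessity of global input-reachability is standard: if it fails, a permutation brings $[A_{\rm sys},B_{\rm sys}]$ to a block lower-triangular form whose leading block receives no input and is therefore uncontrollable for every weight choice by the PBH test, exactly as in the necessity part of Theorem \ref{main_theorem}. For sufficiency, note first that the hypothesis $\bigcap_{K\in{\cal K}}\sigma(A+BKC)=\emptyset$ forces $(A,B)$ controllable: an uncontrollable mode would yield a common left eigenvector annihilating $B$, hence a $\lambda_0$ lying in $\sigma(A+BKC)$ for every $K\in{\cal K}$. Thus statement 1) applies and no parameter-dependent uncontrollable mode exists, and it remains to eliminate fixed uncontrollable modes through item 2) of Lemma \ref{Theorem of Linear Parameterization}, i.e.\ to produce weights with ${\rm grank}[\lambda_0 I_{nN}-A_{\rm sys},B_{\rm sys}]=nN$ for each $\lambda_0\in\sigma(A)$. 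Reusing the tree construction of Theorem \ref{MIMO_equal}, I would fix a spanning tree rooted at a driving vertex and zero out the off-tree weights, so that after permutation the matrix becomes block lower-triangular with leading block $[\lambda_0 I-A,B]$ (full row rank by controllability) and remaining diagonal blocks $\lambda_0 I-(A+BK_iC)$ carrying independent free diagonal gains $K_i\in{\cal K}$; the no-fixed-mode hypothesis guarantees ${\rm grank}[\lambda_0 I-(A+BK_iC)]=n$ for each such block, whence the whole matrix attains full generic row rank.

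The main obstacle is the graph-theoretic reduction in statement 1). The matrices delivered by Lemma \ref{Theorem of Linear Parameterization} live on the edge space of ${\cal G}_{\rm sys}$ through $K_IK$ and $K_I\Delta$, and they simultaneously carry the channel-mixing pattern of $T(\lambda)$, whereas the hypothesis is stated on the vertex-space graph $\bar{\cal G}_{\rm sys}$. Making this correspondence rigorous — verifying the no-cancellation hypothesis of Lemma \ref{lemma_equal_graph} for the incidence factors (including the channel replication), confirming that the induced self-loops keep cycles from becoming input-unreachable, and checking that the channel layer never obstructs reachability once each row of $T(\lambda)$ is nonzero — is the delicate step; everything else reduces to Kronecker-product bookkeeping and the two cited lemmas.
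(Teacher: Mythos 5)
Your proposal follows essentially the same route as the paper's proof: the incidence-based rank-one parameterization (\ref{linear_pa}), Lemma \ref{Theorem of Linear Parameterization} for both statements, Lemma \ref{ControlNotZero} to guarantee each row of $C(\lambda I-A)^{-1}B$ is nonzero, Lemma \ref{lemma_equal_graph} to pass from the edge space of ${\cal G}_{\rm sys}$ back to the vertex space, and the spanning-tree construction with off-tree weights set to zero so that the diagonal blocks become $A+BK_iC$ with free diagonal gains, handled by the no-fixed-mode hypothesis. The ``delicate step'' you flag is resolved in the paper exactly along the lines you anticipate: one nonzero entry $\sigma(i)$ is selected in each row of the transfer matrix, and the path from vertex $\{i,j\}$ is traced back through $(j-1)$ iterations of $\sigma(\cdot)$ and ${\rm Par}(\cdot)$ to an input-reachable vertex $\{\cdot,1\}$.
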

\begin{proof}  We first prove Statement 1). When Lemma \ref{Theorem of Linear Parameterization} is used on (\ref{linear_pa}), direct algebraic manipulations show that the associated transfer function matrices are
{\footnotesize \[\begin{aligned}
&{G_{{{zv}}}}(\lambda ) = {\bf col}\{K_I\otimes c_i|_{i=1}^r\}{(\lambda I - I \otimes A)^{ - 1}}[K \otimes b_1,...,K \otimes b_r]\\
&=\left[
    \begin{array}{ccc}
      {{K_I}K \otimes {c_1}{{(\lambda I - A)}^{- 1}}{b_1}} & \cdots & {{K_I}K \otimes {c_1}{{(\lambda I - A)}^{- 1}}{b_r}} \\
      \vdots & \cdots & \vdots \\
      {{K_I}K \otimes {c_r}{{(\lambda I - A)}^{- 1}}{b_1}} & \cdots & {{K_I}K \otimes {c_r}{{(\lambda I - A)}^{- 1}}{b_r}} \\
    \end{array}
  \right] \\
&{G_{{{zu}}}}(\lambda) = {\bf col}\{K_I\otimes c_i|_{i=1}^r\}[\Delta\otimes ({(\lambda I - A)^{ - 1}}B)]\\
&=\left[
    \begin{array}{c}
      (K_I\Delta)\otimes (c_1(\lambda I-A)^{-1}B) \\
      \vdots \\
      (K_I\Delta)\otimes (c_r(\lambda I-A)^{-1}B)
    \end{array}
  \right].
  \end{aligned}\]
  }Partition $G_{zv}(\lambda)$ into $r\times r$ blocks, where the $(i,j)$th block is $(K_IK)\otimes(c_i(\lambda I- A)^{-1}b_j)$. From Lemma \ref{ControlNotZero}, there is at least one nonzero block in each row block of $G_{zv}(\lambda)$. Suppose that the $(i,\sigma(i))$th block is nonzero, $i=1,...,r$, $\sigma(i)\in\{1,...,r\}$. Let $\bar {G}_{zv}$ be the matrix with the same dimensions and partitions as ${G}_{zv}(\lambda)$ by setting its $(i,\sigma(i))|_{i=1}^r$th blocks to be $K_IK$ and the rest zero. Similarly, partition $G_{zu}(\lambda)$ into $r\times 1$ blocks, where the $i$th row block is $(K_I\Delta)\otimes(c_i(\lambda I-A)^{-1}B)$. Again from Lemma \ref{ControlNotZero}, each of its row blocks is nonzero. Let $\bar G_{zu}=[(K_I\Delta)^{\intercal},...,(K_I\Delta)^{\intercal}]^{\intercal}$, and define $\Delta_U\doteq [\Delta^{\intercal},...,\Delta^{\intercal}]^{\intercal}$. It is now easy to see that, if  {\small ${\cal G}_{\rm aux}(\bar G_{zv}, \bar G_{zu})$} is globally input reachable, then {\small ${\cal G}_{\rm aux}(G_{zv}(\lambda),G_{zu}(\lambda))$} will be.

  Define matrices $G\doteq {\bf diag}\{K_I|_{i=1}^r\}$, $P\doteq \Delta_U$, and let $H$ be such that $GH=\bar G_{zv}$. Then, $(\bar G_{zv}, \bar G_{zu})$ can be written as $(GH,GP)$. Let $\Lambda$ be a diagonal matrix whose diagonal entries are free parameters, and $L$ a Laplacian matrix associated with ${\cal G}_{\rm sys}$.  Using Lemma \ref{lemma_equal_graph} on $(H,P,G,\Lambda)$,  we obtain the matrix
$\left[ {{L_U},{\Delta _U}} \right],$ where $L_U$ is a matrix with $r\times r$ blocks with the $(i,\sigma(i))|_{i=1}^r$th block being $L$ and each of the rest being the $N\times N$ zero matrix.

Now assume that there is a spanning tree in ${\cal G}_{\rm sys}$ with the topological order $(1,...,N)$, denoted by $\cal T$, while $\delta_1\ne 0$.  Denote the parent of vertex $k$ by ${\rm Par}(k)$. Then ${\rm Par}(k)\in\{u_1,1,...,k-1\}$.  It can be observed that the $(N(i-1)+j, N(\sigma(i)-1)+{\rm Par}(j))$th entry of $L_U$ is nonzero, and the $(N(i-1)+1, 1)$th entry of $\Delta_U$ is also nonzero, for each $1\le i\le r, 1\le j \le N$. Denote the vertex of ${\cal G}_{\rm aux}(L_U,\Delta_U)$ associated with the $(N(i-1)+j)$th row of $L_U$ by the pair $\{i,j\}$. Based on these observations, vertex $\{i,j\}$ has an ingoing edge from vertex $\{\sigma(i), {\rm Par}(j)\}$, and vertex $\{i,1\}$ is always input-reachable, for $1\le i\le r, 2\le j \le N$. Consequently, for each vertex $\{i,j\}$, there is a path from {\small $\{\underbrace{{\sigma(\cdots(\sigma(i))\cdots)}}_{(j-1)\quad \sigma(\cdot)}, \underbrace{{{\rm Par}(\cdots({\rm Par}(j))\cdots)}}_{(j-1) \quad {\rm Par}(\cdot)}\}$} to it. From the fact {\small $\underbrace{{{\rm Par}(\cdots({\rm Par}(j))\cdots)}}_{(j-1)\quad {\rm Par}(\cdot)}\}=1$ }, it concludes that every vertex of ${\cal G}_{\rm aux}(L_U,\Delta_U)$ is input-reachable. By Lemma \ref{lemma_equal_graph}, there is no input-unreachable cycle in {\small${\cal G}_{\rm aux}(G_{zv}(\lambda),G_{zu}(\lambda))$}. By Lemma \ref{Theorem of Linear Parameterization}, Statement 1) follows immediately. The case that ${\cal G}_{\rm sys}$ can be decomposed into more than one trees rooted at the driving vertices  follows similar arguments.

We then prove Statement 2). The necessity of the global input-reachability follows similar arguments to the proof of Theorem \ref{main_theorem},  thus omitted here. For sufficiency,  because of Statement 1) we only need to prove that Condition 2) of Lemma \ref{Theorem of Linear Parameterization} is satisfied. Owing to the absence of subsystem fixed modes, this can be proved by following similar arguments to the proof for sufficiency of Theorem \ref{MIMO_equal}, where we just need to replace the corresponding $A-p_iBC$ with $A+B{\bf diag}\{l^{[k]}_{ii}|_{k=1}^r\}C$ for $i=1,...,N$. Details are omitted due to their similarities.
\end{proof}

 {Statement 1) of Theorem \ref{MIMO} indicates that, under the necessary conditions for structural controllability (namely, the controllability of $(A,B)$ [\citealt{Y_Zhang_2016}] and global input-reachability of ${\bar {\cal G}}_{\rm sys}$), all parameter-dependent modes are generically controllable.}  Statement 2) of Theorem \ref{MIMO} means that, with the absence of subsystem fixed modes, global input-reachability is sufficient for structural controllability. It is easy to see that Theorem \ref{main_theorem} is a special case of Theorem \ref{MIMO}. {Additionally, when $(A,B,C)$ has some fixed modes w.r.t. $\cal K$, provided that the necessary conditions in Statement 1) hold, a testable procedure for structural controllability of the NRCS is to verify the row generic rank of the matrix similar to (\ref{fullrank}) at each of these fixed modes.\footnote{There are many criteria to verify whether $(A,B,C)$ has fixed modes w.r.t. $\cal K$, including the algebraic criterion and the matroid based criterion \citep[Ch. 6.5]{Murota_Book}. The latter enables polynomial time complexity.}  Since each indeterminate has a rank-one coefficient matrix therein, this can be done in polynomial time via the tool of matroid intersection \citep{Murota_Book}. { However, in such case, the necessary and sufficient conditions seem to depend on $(A,B,C)$ and $\bar {\cal G}_{\rm sys}$ in a complicatedly coupled way, which is hard to be presented in simple graph-theoretic forms \citep{Murota_Book}.}}

\begin{remark}
It can be seen that if $\bigcap \nolimits_{l\in {\mathbb R}}\sigma(A+lBC)=\emptyset$, then $(A,B,C)$ has no fixed mode w.r.t. $\cal K$, but the inverse is not necessarily true. This means the condition of Theorem \ref{MIMO} is less restrictive than that of Theorem \ref{MIMO_equal}, which is reasonable, as allowing heterogeneous interaction weights permits more freedom for weight assignment. {{However, Theorem \ref{MIMO_equal} cannot be directly obtained from Theorem \ref{MIMO}. This is because the set of weights satisfying $L_1=\cdots=L_r$  has zero Lebesgue measure in ${\mathbb R}^{|{\cal E}_{xx}|r}$ (the parameter space for the differently weighting~case). The essential differences between interaction fashions (b) and (c) lie in two aspects. On the one hand, the former fashion readily allows an analytical expression on how the global spectrum (modes of $A_{\rm sys}$) and eigenspaces depend on $(A,B,C)$ and the spectrum of the matrix $L$ formed by the interaction weights (see the proof of Proposition \ref{new_proposition_nopara}), while a similar analytical expression is unlikely to exist for the latter fashion. This fact means that the interaction weights may have affected the corresponding global spectra in different ways. On the other hand, each indeterminate in $A_{\rm sys}$ of fashion (b) has a coefficient matrix with a rank larger than one if ${\rm rank} BC>1$, while this rank equals one in fashion (c). Unlike the rank-one case, verifying the existing criteria for structural controllability in the high-rank case generally requires exponential computational burden in the dimension of the problem \citep{Anderson_1982,zhang2019structural}. This makes techniques towards these two fashions different. }}
\end{remark}

\section{Extensions with subsystem heterogeneities and application examples} \label{section_V}

This section will discuss extending results in the previous sections to the cases with subsystem heterogeneities and show their direct applications on some typical practical systems.  Specifically, all these examples involve subsystems with heterogeneous parameters.

When modeling real-world networked systems, it is often the case that subsystems obey the same physical laws, thus parameterized similarly, but possibly with different values of their {\emph{elementary parameters}}. Here, the elementary parameters refer to parameters that directly describe the movements of subsystems (for example, in the mass-spring-damper system in Fig. \ref{damping_vehicle}, the mass $m_i$,  the constants of the spring $k_i$, and of the damper $\mu_i$ could be seen as elementary parameters).
At first sight, the inevitable subsystem heterogeneities caused by variants of subsystem elementary parameters might prevent our analysis from being applicable. However, our analysis and most results in the former sections are indeed applicable under certain of these heterogeneities. We will provide two of such cases.

The first case is that one could decouple the `heterogeneous part' from subsystem dynamics and put it into the interaction weights. If the structures of the associated Laplacian matrices are preserved after such operation, then most results in Sections \ref{section_SISO}--\ref{MIMO_section} could still be valid.

{\begin{example}[{Connected tank system}]Consider the \\~ liquid-level system with $N$ {connected tanks} shown in
\citet[Fig. 4.2]{Modern_Control_Ogata}. Assuming small variations of the variables from the steady-state values, the dynamics of the $i$th tank is ${h_i} - {h_{i + 1}} \!=\! {q_i}{R_i},{C_i}{\dot h_i} \!=\! {q_{i - 1}} - {q_i}$, where $h_i$ is the head of the liquid level, $q_i$ is the outflow rate,  $C_i$ and $R_i$ are the capacitance of the tank and the resistance of liquid flow in the pipe, respectively, for $i\in\{1,...,N\}$. Here, $q_0$ should be regarded as the input rate, and $h_{N+1}\!\equiv\!0$. Rewrite the previous equation as $${{\dot h}_i}= l_{i,i-1}({h_{i - 1}} - {h_i})+ l_{i,i+1}({h_{i + 1}} - {h_i}),$$ with $l_{i,i-1}\!\doteq\!1/(C_iR_{i - 1})$ and $l_{i,i+1}\!\doteq\!1/({C_i}{R_i})$. Regarding $\{C_i,R_i\}|_{i=1}^N$ as independent indeterminates, there is no algebraic dependence among the nonzero off-diagonal entries of the associated matrix $L\!=\![-l_{ij}]$. By Theorem \ref{main_theorem}, since the considered {connected tank} system has a chain structure, we conclude that it is structurally controllable.
\end{example}}

\begin{example}[The motivating example continuing]Let us revisit the mass-spring-damper system in Fig. \ref{damping_vehicle}. Regarding $\{m_i,k_i,\mu_i\}|_{i=1}^N$ as independent indeterminates, there is no algebraic dependence among the weights $\{l^{[1]}_{ij}\}$ and $\{l^{[2]}_{ij}\}$.  Moreover, it can be validated that the associated $(A,B,C)$ has no fixed mode. As the network topology is a chain, this system is structurally controllable by driving arbitrarily one mass from Theorem \ref{MIMO}.
\end{example}

The second case is that, the subsystem heterogeneities arising from variants in values of elementary parameters could be expressed by $A+\delta A_i$. Here $\delta A_i$ is a structured matrix, $i\in \{1,...,N\}$, and $\delta A_1,...,\delta A_N$ have the same structure, denoted by $\delta A$,  whereas their nonzero entries could take values independently (both within each $\delta A_i$ and between two different $\delta A_i$ and $\delta A_j$). For brevity, we only focus on the SISO subsystem case. In this regard, rewrite the $i$th subsystem dynamics (\ref{sub_dynamic}) as 
\begin{equation}\label{sub_heteroge}
\dot x_i(t)=(A+\delta A_i)x_i(t)+bv_i(t).
\end{equation}

\begin{corollary} \label{corollary_het} Consider the NRCS described by (\ref{sub_heteroge}) and (\ref{sub_interaction}). This system is structurally controllable, if 1) $(A+\delta A,b)$ is structurally controllable and $(A+\delta A,c)$ is structurally observable\footnote{This means that there exists one numerical realization of $\delta A$, denoted by $\overline {\delta A}$, such that $(A+\overline {\delta A}, b)$ is controllable and $(A+\overline {\delta A}, c)$ is observable.}; 2) $\bar {\cal G}_{\rm sys}$ is globally input-reachable.
\end{corollary}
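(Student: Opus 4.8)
The plan is to exploit the fact that, by Definition~\ref{definition_SC} (extended so that the nonzero entries of each $\delta A_i$ are treated as free parameters alongside the weights $\{l_{ij}\}$), structural controllability requires only the \emph{existence} of a single controllable numerical realization. Condition~1), as made precise in its footnote, furnishes one common realization $\overline{\delta A}$ of the structure $\delta A$ for which $(A+\overline{\delta A},b)$ is controllable \emph{and} $(A+\overline{\delta A},c)$ is observable. The idea is therefore to collapse the heterogeneity by assigning this same realization to every subsystem, reducing the problem to the homogeneous setting already settled by Theorem~\ref{main_theorem}.

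Concretely, I would set $\delta A_i=\overline{\delta A}$ for all $i\in\{1,\dots,N\}$, which is a legitimate assignment of the free parameters since each $\delta A_i$ shares the structure $\delta A$. With this choice the system (\ref{sub_heteroge})--(\ref{sub_interaction}) becomes exactly an NRCS of the form (\ref{sub_dynamic})--(\ref{sub_interaction}) with a common subsystem matrix $\tilde A\doteq A+\overline{\delta A}$, input vector $b$, and output vector $c$. By the choice of $\overline{\delta A}$, $(\tilde A,b)$ is controllable and $(\tilde A,c)$ is observable, so Condition~1) of Theorem~\ref{main_theorem} holds (with $A$ replaced by $\tilde A$); Condition~2) holds by hypothesis. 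Theorem~\ref{main_theorem} then guarantees that this homogeneous NRCS is structurally controllable, i.e.\ there is a weight assignment $\{l_{ij}\}$ rendering the associated numerical system controllable. The resulting full parameter assignment, namely $\delta A_i=\overline{\delta A}$ for all $i$ together with these weights, is thus a controllable numerical realization of the heterogeneous NRCS, which by definition establishes its structural controllability.

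The only real subtlety is conceptual rather than computational: one must recognize that structural controllability asks only for a single good realization, so it is permissible to restrict to the lower-dimensional slice $\delta A_1=\cdots=\delta A_N=\overline{\delta A}$ of the combined parameter space (this slice being of measure zero is immaterial), and that Condition~1), read through the footnote, supplies a realization serving controllability and observability \emph{simultaneously}. I expect no hard step here. As a more hands-on alternative paralleling the constructive proof of Theorem~\ref{main_theorem}, one could instead keep the $\delta A_i$ generic and mutually independent and re-run the spanning-tree induction with each diagonal block replaced by $A+\delta A_i-p_i bc$; there the main obstacle would be ensuring that each pair $(A+\delta A_i,b)$, $(A+\delta A_i,c)$ is controllable/observable so that Lemma~\ref{BasicPoleAssign} can be invoked at every step of the recursion. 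Since controllability and observability are generic properties, Condition~1) forces these to hold for almost all realizations of $\delta A$, so this route also closes, but the equal-realization collapse above avoids the bookkeeping entirely.
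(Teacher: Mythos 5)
Your proposal is correct and follows essentially the same route as the paper's proof: the paper likewise fixes a single realization $\overline{\delta A}$ witnessing Condition~1), sets $\delta A_i=\overline{\delta A}$ for every subsystem, and invokes Theorem~\ref{main_theorem} on the resulting homogeneous NRCS. Your added remarks on why restricting to a measure-zero slice of the parameter space is legitimate, and the sketched alternative via the spanning-tree induction, go beyond the paper's terse argument but do not change the approach.
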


\begin{proof} If 1) and 2) are satisfied, choose one numerical realization of $\delta A$, denoted by $\overline {\delta A}$, such that $(A+\overline {\delta A}, b)$ is controllable and $(A+\overline {\delta A}, c)$ is {observable}. Let $\delta A_i$ for each subsystem take the same value as ${\overline {\delta A}}$, $i\in \{1,...,N\}$. From Theorem \ref{main_theorem} and because of 2), the resulting system is structurally controllable.
\end{proof}

\begin{example}[Power network]Consider a power network consisting of $N$ generators. The dynamics of the $i$th generator around its equilibrium state could be described by the linearized Swing equation \citep{P.Fa2014Controllability}: ${m_i}{{\ddot \theta }_i} + {d_i}{{\dot \theta }_i} =  - \sum\nolimits_{j=1}^N {{k_{ij}}{\rm{(}}{\theta _i}{\rm{ - }}{\theta _j}{\rm{) + }}{P_{i}}}$, where $\theta_i$ is the phrase angle, $m_i$ and $d_i$ are respectively the inertia and damping coefficients, and $P_{i}$ is the input power, $i\in\{1,...,N\}$. $k_{ij}$ is the susceptance of the power line from the $j$th generator to the $i$th one. Rewrite the previous equation as 
{\small\begin{equation}\label{sub_power2} \begin{array}{l}
\left[ {\begin{array}{*{20}{c}}
{{{\dot \theta }_i}}\\
{{{\ddot \theta }_i}}
\end{array}} \right] = \underbrace {\left[ {\begin{array}{*{20}{c}}
0&1\\
0& -d_i/m_i
\end{array}} \right]}_{A + \delta A_i}\left[ {\begin{array}{*{20}{c}}
{{\theta _i}}\\
{{{\dot \theta }_i}}
\end{array}} \right]+\\
{\kern 1pt} \sum\limits_{j = 1,...,N} {\underbrace {\frac{{{k_{ij}}}}{{{m_i}}}}_{{l_{ij}}}} \underbrace {\left[ {\begin{array}{*{20}{c}}
0\\
1
\end{array}} \right]}_b\underbrace {\left[ {\begin{array}{*{20}{c}}
1&0
\end{array}} \right]}_c\left[ {\begin{array}{*{20}{c}}
{{\theta _j} - {\theta _i}}\\
{{{\dot \theta }_j} - {{\dot \theta }_i}}
\end{array}} \right] + \underbrace {\left[ {\begin{array}{*{20}{c}}
0\\
1
\end{array}} \right]}_b\frac{{{P_i}}}{{{m_i}}}.
\end{array}\end{equation}}Suppose that $\{m_i,d_i\}|_{i=1}^N$  are mutually independent. Then, {{${-d_i}/{m_i}$}} and ${k_{ij}}/{m_i}$ can be seen as indeterminates representing subsystem heterogeneities and the interaction weights, respectively. The considered power network model can be described by (\ref{sub_heteroge}) and (\ref{sub_interaction}), which is an NRCS with SISO subsystems. It can be validated that, $(A+\delta_i A, b)$ is structurally controllable and $(A+\delta_i A, c)$ is structurally observable. By Corollary \ref{corollary_het},  provided that there exists a path (consisting of power lines) from one input to each generator in the power system, this system is structurally controllable.
\end{example}

{Finally, suppose that subsystem parameters $(A,B,C)$ are heterogeneous. Specifically, (\ref{sub_dynamic})--(\ref{sub_interaction}) becomes
 \begin{equation}\label{sub_heteroge2}
\dot x_i(t)=A_{[i]}x_i(t)+b_{[i]}\big\{\delta_iu_i(t)+\sum \limits_{j=1,j\ne i}^N  l_{ij}(c_{[j]}x_j(t)-c_{[i]}x_i(t))\big\},
\end{equation}with known $A_{[i]}\in {\mathbb R}^{n_i\times n_i}$, $b_{[i]},c_{[i]}^{\intercal}\in {\mathbb R}^{n_i}$, and $n_{i}$ for different $i$ not needing to be identical.

\begin{corollary} \label{corollary_het2} Consider the NRCS with heterogeneous SISO subsystems described by (\ref{sub_heteroge2}). This system is structurally controllable, if 1) $(A_{[i]},b_{[i]})$ is controllable and $(A_{[i]},c_{[i]})$ is observable, $\forall i$; 2) $\bar {\cal G}_{\rm sys}$ is globally input-reachable.
\end{corollary}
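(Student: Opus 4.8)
The plan is to adapt the constructive, induction-based proof of Theorem \ref{main_theorem} to the heterogeneous setting, since Corollary \ref{corollary_het2} is exactly the heterogeneous-subsystem analogue of that theorem. The key observation is that the proof of Theorem \ref{main_theorem} never truly relied on the subsystems being identical: it built controllability up one vertex at a time along a spanning tree, and at each step the only facts used about the newly added subsystem were that its own pair $(A,b)$ is controllable and $(A,c)$ observable, plus a pole-placement freedom (Lemma \ref{BasicPoleAssign}) applied block-by-block. First I would set up the same induction: assume $\bar{\cal G}_{\rm sys}$ is globally input-reachable, decompose ${\cal G}_{\rm sys}$ into vertex-disjoint trees each rooted at a driving vertex (setting inter-tree weights to zero), and fix a topological order $(1,\dots,N)$ on a single tree with $\delta_1\ne 0$, handling the multiple-tree case at the end exactly as before.

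Next I would carry out the inductive step. Write $A_k$ for the leading principal block of $A_{\rm sys}$ corresponding to the first $k$ subsystems, now with the $i$th diagonal block equal to the heterogeneous $A_{[i]}$ rather than a common $A$, and off-diagonal coupling blocks of the form $l_{ij}b_{[i]}c_{[j]}$. Partition $A_{k+1}$ as in (\ref{matrix_partion}), where the new bottom-right diagonal block is $A_{44}=A_{[k+1]}-p_{k+1}b_{[k+1]}c_{[k+1]}$ and the coupling entry in the last row block is $l_{k+1}b_{[k+1]}c_{[{\rm Par}(k+1)]}$. Because condition 1) guarantees $(A_{[k+1]},b_{[k+1]})$ is controllable and $(A_{[k+1]},c_{[k+1]})$ is observable, Lemma \ref{BasicPoleAssign} still furnishes a scalar $p_{k+1}$ steering $\sigma(A_{44})$ away from any prescribed finite set. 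I would then replicate the two-case Schur-complement argument verbatim, invoking Lemma \ref{ControlNotZero} (which holds for any controllable pair, independent of homogeneity) to show the relevant scalar transfer quantity $c_{[k+1]}(\cdot)^{-1}\bar b$ vanishes only on a finite set $\Omega_k$, and finally choosing $p_{k+1}\ne 0$ with $(\sigma(A_k)\cup\Omega_k)\cap\sigma(A_{44})=\emptyset$ so that $[A_{k+1}-\lambda I,B_{k+1}]$ has full row rank for every $\lambda$, giving controllability of $(A_{k+1},B_{k+1})$ by PBH.

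The one genuinely new bookkeeping point, and the step I expect to require the most care, is that the state dimensions $n_i$ now differ across subsystems, so the block sizes $n_1,n,n_2$ in the partition (\ref{matrix_partion}) are no longer all multiples of a common $n$; I would therefore restate the partition purely in terms of $A_{[1]},\dots,A_{[k+1]}$ and verify that every dimensional compatibility in the Schur-complement manipulation (\ref{matrix_process}) and in (\ref{not_zero_equal}) still holds with the heterogeneous vectors $b_{[i]},c_{[i]}$ in place of $b,c$. This is mechanical but must be checked, since the original proof implicitly used the uniform block size $n$ when writing $\bar b=[b^{\intercal},0_{1\times(n_1-n)}]^{\intercal}$. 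Beyond that, the argument transfers without essential change, because the mode-separation provided by Lemma \ref{BasicPoleAssign} acts locally on each diagonal block and never needs the blocks to coincide.

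In short, the whole corollary follows by running the proof of Theorem \ref{main_theorem} with $(A,b,c)$ replaced by the subsystem-indexed $(A_{[i]},b_{[i]},c_{[i]})$ at each inductive stage; the only deviation is tracking the non-uniform block dimensions. Consequently I would present the proof compactly, stating that conditions 1)--2) let us reproduce each inductive step of Theorem \ref{main_theorem} and that controllability of the full network then follows by the same disjoint-tree assembly, with the detailed repetition omitted for brevity.
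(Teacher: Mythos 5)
Your proposal is correct and follows essentially the same route as the paper: the paper's proof of Corollary \ref{corollary_het2} likewise reruns the induction of Theorem \ref{main_theorem} with $A_{44}=A_{[k+1]}-p_{k+1}b_{[k+1]}c_{[k+1]}$ and the coupling block $p_{k+1}b_{[k+1]}c_{[\overline{k+1}]}$, where $\overline{k+1}$ denotes the parent of vertex $k+1$ in the spanning tree. One small notational point: the scalar transfer quantity arising in the Schur complement carries the parent's output vector $c_{[\overline{k+1}]}$ rather than $c_{[k+1]}$ as you wrote, though this is immaterial since either vector is nonzero under condition 1) and Lemma \ref{ControlNotZero} applies all the same.
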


\begin{proof} The proof is similar to that for sufficiency part of Theorem \ref{main_theorem}. In each of the induction in that proof, we just need to replace $A_{44}$ with  $A_{44}=A_{[k+1]}-p_{k+1}b_{[k]}c_{[k]}$, and then (\ref{matrix_process}) becomes
$[A_{[k+1]}-p_{k+1}b_{[k+1]}c_{[k+1]}- \lambda I, p_{k+1}b_{[k+1]}\\~ c_{[\overline{k+1}]}(A_{22}-\lambda I)^{-1}A_{21}(A_{11}-\lambda I)^{-1}\bar b],$
where $\overline{k+1}$ is the parent of vertex $k+1$ in the associated spanning tree. It is readily to see that the remaining statement of that proof is still true. Details are omitted.
\end{proof} }
\vspace*{-0.1em}
\section{Conclusions} \label{section_conl}
This paper studies the structural controllability of NRCSs in which subsystems are of identical and fixed high-order linear dynamics. Three types of subsystem interaction fashions are considered. {We have shown that, the NRCS has no parameter-dependent uncontrollable modes under some necessary connectivity conditions.} Some necessary and/or sufficient conditions are given for structural controllability depending on the subsystem dynamics and the network topologies in a decoupled form. Extensions to handle certain subsystem heterogeneities are also provided with various practical examples. {Our results are among the recent attempts \citep{Liu2019AGC,Commault2019Generic,menara2018structural} to give simple graph-theoretic conditions for structural controllability of networks with parameter dependencies, with an emphasis on subsystem dynamics and more complicated interaction fashions.} 

\vspace*{-0.1em}
\section*{Appendix: Proof of Proposition 1}
\renewcommand{\theequation}{\arabic{equation}}

{\begin{lemma} \label{notfixedratio}Consider $A_{\rm sys}$ with $L_1=\cdots=L_r=L$. Suppose $\cal G_{\rm sys}$ has a spanning tree. Let $p=(p_1,...,p_{{ \bar r}})$ with $p_i$ being an indetermiante denoting the weight of the $i$th edge in ${\cal E}_{xx}$, ${ \bar r}\doteq |{\cal E}_{xx}|$, and $L$ be the Laplacian matrix of $\cal G_{\rm sys}$. Let $\det(sI-A+p_iBC)=\alpha_0(s)\beta_0(s,p_i)$ with $\alpha_0(s)$ being an $s$-factor and $\beta_0(s,p_i)$ being an $(s,p)$-factor. Suppose that $\det(sI-A_{\rm sys})$ can be decomposed as $\det(sI-A_{\rm sys})=\alpha(s)\beta(s,p)$, where $\alpha(s)$ is an $s$-factor and $\beta(s,p)$ is an $(s,p)$-factor. Then, $\deg\beta(s,p)=(N-1)\deg\beta_0(s,p_i)$. That is, $A_{\rm sys}$ has $(N-1)\deg\beta_0(s,p_i)$ parameter-dependent modes (counting multiplicities).
\end{lemma}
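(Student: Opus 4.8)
The plan is to factor the characteristic polynomial of $A_{\rm sys}$ through the spectrum of $L$ and then track which $s$-roots genuinely move with $p$. Write $\phi(s,\mu)\doteq\det(sI_n-A+\mu BC)$, the characteristic polynomial of the single-subsystem matrix $A-\mu BC$; it is monic of degree $n$ in $s$, and its factorization $\phi(s,\mu)=\alpha_0(s)\beta_0(s,\mu)$ into the (monic) $s$-factor and $(s,\mu)$-factor is exactly the one in the statement, so that $\deg_s\beta_0$ is independent of $\mu$. First I would put $L$ in Schur (upper-triangular) form $L=TJT^{-1}$ over $\mathbb C$. Then $I_N\otimes A-J\otimes BC$ is block upper-triangular with diagonal blocks $A-\mu_iBC$, the $\mu_i$ being the eigenvalues of $L$ listed with multiplicity, and the similarity $A_{\rm sys}=(T\otimes I_n)(I_N\otimes A-J\otimes BC)(T\otimes I_n)^{-1}$ gives the identity $\det(sI_{nN}-A_{\rm sys})=\prod_{i=1}^N\phi(s,\mu_i)$, valid for every $p$.

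Next I would isolate the fixed zero eigenvalue. Since each row sum of $L$ is zero, $L\mathbf 1=0$, so $\mu_1\equiv 0$ is always an eigenvalue; because ${\cal G}_{\rm sys}$ has a spanning tree, Corollary \ref{no_repeated_L} supplies weights making the eigenvalues of $L$ distinct, whence $0$ is simple and $\mu_2(p),\dots,\mu_N(p)$ are simple and nonzero for generic $p$. The $\mu_1=0$ factor is $\phi(s,0)=\det(sI_n-A)=\alpha_0(s)\beta_0(s,0)$, a polynomial in $s$ alone, while each $i\ge 2$ contributes $\alpha_0(s)\beta_0(s,\mu_i)$. Collecting terms,
\[
\det(sI_{nN}-A_{\rm sys})=\alpha_0(s)^N\beta_0(s,0)\cdot\prod_{i=2}^N\beta_0(s,\mu_i(p)),
\]
where the first factor lies in $R[s]$. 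Once I show the remaining product carries no further $s$-factor, the product is the $(s,p)$-factor $\beta(s,p)$ of Definition \ref{p-parameterized}, and since each $\beta_0(s,\mu_i)$ is monic of the fixed degree $\deg_s\beta_0$ in $s$, we get $\deg\beta(s,p)=(N-1)\deg\beta_0$, which is the claim.

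The hard part will be proving that $\prod_{i=2}^N\beta_0(s,\mu_i(p))$ has no $s$-factor, because the $\mu_i(p)$ are only algebraic in $p$. I would argue by contradiction: if it had an $s$-factor, it would have a complex $s$-root $\lambda_0$ for all $p$, i.e. $\prod_{i=2}^N\beta_0(\lambda_0,\mu_i(p))\equiv 0$. Writing $q(\mu,p)\doteq\det(\mu I-L)/\mu=\prod_{i=2}^N(\mu-\mu_i(p))\in R[\mu,p]$ and expanding the polynomial $\mu\mapsto\beta_0(\lambda_0,\mu)$ over its (finitely many, constant) roots $\nu$, this product equals, up to a nonzero constant, $\prod_\nu q(\nu,p)$. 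Here $\beta_0(\lambda_0,\mu)$ is a \emph{nonzero} polynomial in $\mu$, for otherwise $(s-\lambda_0)$ would divide $\beta_0(s,\mu)$, contradicting that $\beta_0$ has no $s$-factor. Since $R[p]$ is an integral domain, $\prod_\nu q(\nu,p)\equiv 0$ forces $q(\nu_0,p)\equiv 0$ for some constant root $\nu_0$, making $\nu_0$ a permanent eigenvalue of $L$. Such $\nu_0$ is nonzero because $q(0,p)=\pm\prod_{i\ge2}\mu_i(p)\not\equiv 0$ (the nonzero eigenvalues are generically nonzero). This is impossible: the entries of $L$ are homogeneous of degree one in $p$, so $L(tp)=tL(p)$ and its spectrum scales by $t$; a permanent nonzero eigenvalue $\nu_0$ would then force the fixed generic matrix $L(p)$ to carry all of the infinitely many values $\{\nu_0/t\}_{t\neq 0}$ in its (finite) spectrum, a contradiction. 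This closes the argument and establishes the degree count.
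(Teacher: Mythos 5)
Your proof is correct, and while it opens the same way as the paper's --- triangularizing $L$ so that $\det(sI-A_{\rm sys})=\prod_{i=1}^N\det(sI-A+\mu_iBC)$ and peeling off the guaranteed $s$-factor $\alpha_0(s)^N\beta_0(s,0)$ --- it resolves the crux (that the algebraically dependent nonzero eigenvalues $\mu_2(p),\dots,\mu_N(p)$ cannot conspire to produce an extra $s$-factor in $\prod_{i\ge 2}\beta_0(s,\mu_i)$) by a genuinely different device. The paper sandwiches $\deg\alpha(s)$: a lower bound from the Jordan-form factorization, and an upper bound obtained by specializing to the spanning-tree Laplacian $L^*$ of (\ref{special_Laplician}), which is lower triangular with the \emph{independent} indeterminates $p_2,\dots,p_N$ as its nonzero eigenvalues, so that $\prod_{i\ge2}\beta_0(s,p_i)$ visibly has no $s$-factor. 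You instead convert a hypothetical permanent root $\lambda_0$ into the polynomial identity $\prod_\nu q(\nu,p)\equiv 0$ via symmetric functions, extract a permanent nonzero eigenvalue $\nu_0$ of $L(p)$, and kill it with the degree-one homogeneity $\sigma(L(tp))=t\,\sigma(L(p))$. Your route is more intrinsic --- it needs no distinguished specialization and would survive any homogeneous linear parameterization of $L$ --- whereas the paper's specialization is constructive and, importantly, the object $\bar A_{\rm sys}$ it introduces is reused in the proof of Proposition \ref{new_proposition_nopara}, so the paper gets double duty from it. Two small points to tighten: your justification that $q(0,p)\not\equiv 0$ ("the nonzero eigenvalues are generically nonzero") is circular as phrased; say instead that Corollary \ref{no_repeated_L} (or the matrix-tree theorem under the spanning-tree hypothesis) exhibits one $p$ at which $0$ is a simple eigenvalue, whence $q(0,p)\ne 0$ there. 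And you should state explicitly that $\prod_{i\ge2}\beta_0(s,\mu_i(p))$ is a genuine element of $R[s,p]$ (being symmetric in the roots of $q(\mu,p)$), since otherwise speaking of its $s$-factors is not meaningful.
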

\begin{proof}
Without losing generality, let $\cal T$ be the spanning tree of $\cal G_{\rm sys}$ with the topological order $(1,...,N)$ and assume that the edge connecting vertex $i$ and its parent has weight $p_i$, $i=2,...,N$. For each $p\in {\mathbb R}^{\bar r}$, suppose that $L$ has $l(p)\le N$ distinct eigenvalues, denoted by $\lambda_1,...,\lambda_{l(p)}$, and assume that each $\lambda_i$ has an algebraic multiplicity $\hat r_i$ (the dependence of $\lambda_i$ and $\hat r_i$ on $p$ has been dropped). Without losing any generality, assume that $\lambda_1=0$ (obviously $\hat r_1\ge 1$). Then, there exists an $N\times N$ invertible matrix $Q$ such that $Q^{-1}LQ=\Gamma$, where $\Gamma={\bf {diag}}\{\Gamma_i|_{i=1}^{l(p)}\}$, and $\Gamma_i$ is the block diagonal matrix consisting of all Jordan blocks of $L$ associated with $\lambda_i$. Notice that each Jordan block is such that the only (possible) non-zero entries are on the diagonal (being $\lambda_i$) and the superdiagonal (being $1$), and $\sum \nolimits_{i=1}^{l(p)} \hat r_i=N$.  Because $(Q^{-1}\otimes I_n)A_{\rm sys}(Q\otimes I_n)=I_N\otimes A-\Gamma\otimes BC$, we obtain $\det (sI-A_{\rm sys})=\prod \nolimits_{i=1}^{l(p)} [\det (sI-A+\lambda_i BC)]^{\hat r_i}=[\det(sI-A)]^{\hat r_1}\alpha_0^{N-\hat r_1}(s)\prod \nolimits_{i=2}^{l(p)}[\beta_0(s,\lambda_i)]^{\hat r_i}$ for any $p\in {\mathbb R}^{{ \bar r}}$. Note that $\lambda_2,...,\lambda_{l(p)}$ may not be independent. Hence, \begin{equation}\label{ineq2}\deg \alpha(s)\ge n\hat r_1+(N-\hat r_1)\deg \alpha_0(s)\ge n+(N-1)\deg \alpha_0(s).\end{equation}

On the other hand, define the Laplician matrix in~(\ref{special_Laplician})
{\small\begin{equation}\label{special_Laplician}L^*=\left[
  \begin{array}{cccc}
    0 & 0 & \cdots & 0 \\
    -p_2 &  p_2 & \cdots & 0 \\
    \vdots & \vdots & \ddots & \vdots \\
    * & \cdots & \cdots & p_{N} \\
  \end{array}
\right],\end{equation}}and let $\bar A_{\rm sys}=I_N\otimes A-L^*\otimes BC$. Note $\bar A_{\rm sys}$ is lower block-triangular. We therefore have $\det(sI-\bar A_{\rm sys})=\det(sI-A)\prod \nolimits_{i=2}^N(sI-A+p_iBC)=\det(sI-A)\alpha_0^{N-1}(s)\prod \nolimits_{i=2}^N\beta_0(s,p_i)$, which should be equal to $\alpha(s)\beta(s,p)$ when $p_1,p_{N+1},...,p_{{ \bar r}}$ are fixed zero. Note that $p_2,...,p_{N}$ are mutually independent.  Hence, \begin{equation}\label{ineq1}\deg \alpha(s)\le n+(N-1)\deg \alpha_0(s).\end{equation}
Inequalities (\ref{ineq1}) and (\ref{ineq2}) indicate  $\deg \alpha(s)= n+(N-1)\alpha_0(s)$, and thus $\deg\beta(s,p)=Nn-\deg \alpha(s)=(N-1)\deg\beta_0(s,p_i)$.
\end{proof}}

{\bf {\emph{Proof of Proposition \ref{new_proposition_nopara}:}}} Suppose that Proposition \ref{new_proposition_nopara} is not true. Then, there exists at least one parameter-dependent mode of $A_{\rm sys}$ that is uncontrollable for all $p\in {\mathbb R}^{{ \bar r}}$ (defined in Lemma \ref{notfixedratio}). Consider the lumped state transition matrix $\bar A_{\rm sys}$ associated with $L^*$ in (\ref{special_Laplician}). From the proof of Lemma \ref{notfixedratio}, $\bar A_{\rm sys}$, obtained by setting $p_1$, $p_{N+1}$,..., $p_{{ \bar r}}$ to be zero, has exactly the same number of parameter-dependent modes as that of $A_{\rm sys}$.\footnote{This condition is vital. Otherwise, setting partial $p_i$ to be zero might transform some parameter-dependent modes to fixed modes, under which circumstance the nonexistence of parameter-dependent uncontrollable modes of the resulting system cannot indicate that the same property holds for the original one.} We will show that, none of the parameter-dependent modes of $\bar A_{\rm sys}$ (which are of one-to-one correspondence to those of $A_{\rm sys}$ when all elements of $p$ are free parameters) can be always uncontrollable, thereby causing a contradiction.

To this end, for each $i\in \{2,...,N\}$, let $\lambda_i$ be a zero of $\beta_0(s,p_i)$. Then, $\lambda_i$ is an eigenvalue of $A-p_iBC$.
Let $\mu_{i}$ be an {\emph{arbitrary}} left eigenvector of $A-p_iBC$ associated with $\lambda_i$. We obtain that $\lambda_i$ and $\mu_{i}$ depend on $p_i$, and for almost all $(p_2,...,p_{N-1}) \in {\mathbb R}^{N-1}$, $\lambda_i\notin \sigma(A)$. Let $w_i$ be one left eigenvector of $L^*$ associated with $p_i$. Note that,  $(w_i^{\intercal}\otimes {\mu_{i}}^{\intercal})\bar A_{\rm sys} = \lambda_i (w_i^{\intercal}\otimes {\mu_{i}}^{\intercal})$,  where $\bar A_{\rm sys}=I_N\otimes A-L^*\otimes BC$, ${\mu_{i}}^{\intercal}(A-p_iBC)=\lambda_i{\mu_{i}}^{\intercal}$, and $w_i^{\intercal}L^*=p_iw_i^{\intercal}$ are used. This means $w_i\otimes\mu_{i}$ is a left eigenvector of $\bar A_{\rm sys}$ associated with $\lambda_{i}$. {On the other hand,  for two distinct $i,j\in\{2,...,N\}$, we know from \citet[Lem. 2]{Rational_function} that $\beta_0(s,p_i)$ and $\beta_0(s,p_j)$ share no common zeros for almost all $(p_i,p_j)\in {\mathbb R}^2$, as $\beta_0(s,p_i)$ and $\beta_0(s,p_j)$ share no common $(s,p)$-factors noting that $p_i$ and $p_j$ are independent.} Hence, $\lambda_i\ne \lambda_j$ for almost all $(p_2,...,p_{N-1})\in {\mathbb R}^{N-1}$.  Suppose that the parameter-dependent mode $\lambda_i$ is always uncontrollable. Then, by the PBH test, for almost all $(p_2,...,p_{N-1})\in {\mathbb R}^{N-1}$ it holds $(w_i^{\intercal}\otimes \mu_i^{\intercal}) ({\bf e}^{N}_1\otimes B)=(w_i^{\intercal}{\bf e}^{N}_1)\otimes (\mu_i^{\intercal}B)=0.$
 As $(L^*,{\bf e}^{N}_1)$ is structurally controllable from Corollary \ref{no_repeated_L}, $w_i^{\intercal}{\bf e}^{N}_1\ne 0$ for almost all $(p_2,...,p_{N-1})\in {\mathbb R}^{N-1}$. To make the previous equation true, it must hold that $\mu_i^{\intercal}B=0$. Notice that by definition, $\mu_i^{\intercal}(\lambda_iI-A+p_iBC)=0.$
We therefore get $\mu_i^{\intercal}(\lambda_iI-A)=0$,
which means that $\lambda_i\in \sigma(A)$ for almost all $(p_2,...,p_N)\in {\mathbb R}^{N-1}$. This contradicts the fact that $\lambda_i$ is a parameter-dependent mode of $\bar A_{\rm sys}$. With the above analysis, we conclude that there cannot exist a parameter-dependent mode of $A_{\rm sys}$ that is always uncontrollable. \hfill $\square$

\section*{\refname}

\end{document}